\tikzset{
  GraphNode/.style={
    draw,
    circle,
    fill=black,
    scale=0.6,
  },
  SubdivisionNode/.style = {
    draw,
    circle,
    scale=0.6,
  },
  RedEdge/.style={
    red,
    line width = 0.5mm,
  },
  GreenEdge/.style={
    green!50!black,
    line width = 0.5mm,
    dotted,
  },
  BlueEdge/.style={
    blue,
    line width = 0.2mm,
    double,
  },
}
\newcommand{\myred}[1]{\textcolor{red}{#1}}
\newcommand{\mygreen}[1]{\textcolor{green!50!black}{#1}}
\newcommand{\myblue}[1]{\textcolor{blue}{#1}}
\newcommand{\secref}[1]{Section~\ref{#1}}
\newcommand{\figref}[1]{Figure~\ref{#1}}
\newcommand{\tabref}[1]{Table~\ref{#1}}
\newcommand{\theoref}[1]{Theorem~\ref{#1}}
\newcommand{\defref}[1]{Definition~\ref{#1}}
\newcommand{\appref}[1]{Appendix~\ref{#1}}
\newcommand{\func}[3]{#1 \colon #2 \to #3}
\newcommand{\size}[1]{\left| #1 \right|}
\newcommand{\bigO}[1]{\mathcal{O}(#1)}
\newcommand{\set}[1]{\left\{ #1 \right\}}
\newcommand{\setST}[2]{\left\{ #1 \;\middle|\; #2 \right\}}
\newcommand{\tuple}[1]{\overrightarrow{#1}}
\newcommand{\edge}[2]{\set{#1, #2}}
\newcommand{\subdivision}[1]{\mathcal{S}(#1)}
\newcommand{\vertexCover}[1]{\tau(#1)}
\newcommand{\coloredDegreeSet}[1]{\mathrm{DS}(#1)}
\newcommand{\deltaTwo}[1]{\tuple{\delta \langle #1 \rangle}}
\newcommand{\pathToSet}[1]{\left\llbracket #1 \right\rrbracket}
\newcommand{\DDToSet}[1]{\pathToSet{#1}}
\newcommand{\DD}[1]{\mathbf{Z}(#1)}
\newcommand{\MDD}[2]{\mathbf{Z}^{#1}(#2)}
\newcommand{\lbl}[1]{\ell(#1)}
\newcommand{\ZDDSubd}{\mathbf{Z}_{\mathrm{subd}}}
\newcommand{\ZDDAns}{\mathbf{Z}_{\mathrm{ans}}}
\newcommand{\backtrack}{\textsc{Backtrack}}
\newcommand{\ddedge}{\textsc{DDEdge}}
\newcommand{\ddvertex}{\textsc{DDVertex}}
\newcommand{\degr}{\mathtt{deg}}
\newcommand{\dn}{\mathtt{dn}}
\newcommand{\comp}{\mathtt{comp}}
\newcommand{\done}{\mathtt{done}}
\newcommand{\true}{\mathit{True}}
\newcommand{\false}{\mathit{False}}
\newcommand{\down}[1]{\mathcal{D}\left(#1\right)}
\begin{document}
\title{Implicit Enumeration of Topological-Minor-Embeddings and Its Application to Planar Subgraph Enumeration\thanks{This work was supported by JSPS KAKENHI Grant Numbers JP15H05711, JP18H04091, JP18K04610, JP18K11153, and JP19J21000.}}
\titlerunning{Implicit Enumeration of Topological-Minor-Embeddings}
%
\author{Yu Nakahata\inst{1}\orcidID{0000-0002-8947-0994} \and
Jun Kawahara\inst{1}\orcidID{0000-0001-7208-044X} \and
Takashi Horiyama\inst{2} \and
Shin-ichi Minato\inst{1}\orcidID{0000-0002-1397-1020}}
\authorrunning{Y. Nakahata et al.}
%
\institute{Kyoto University, Japan\\
\email{\{nakahata.yu.27e@st, jkawahara@i, minato@i\}.kyoto-u.ac.jp} \and
Hokkaido University, Japan\\
\email{horiyama@ist.hokudai.ac.jp}}
\maketitle              
\begin{abstract}
Given graphs $G$ and $H$, we propose a method to implicitly enumerate topological-minor-embeddings of $H$ in $G$ using decision diagrams.
We show a useful application of our method to enumerating subgraphs characterized by forbidden topological minors, that is, planar, outerplanar, series-parallel, and cactus subgraphs.
Computational experiments show that our method can find all planar subgraphs in a given graph at most five orders of magnitude faster than a naive backtracking-based method.

\keywords{Graph algorithm \and Enumeration problem \and Decision diagram \and Frontier-based search \and Topological minor}
\end{abstract}
\section{Introduction}
Subgraph enumeration is a fundamental task in several areas of computer science.
Many researchers have proposed dedicated algorithms for subgraph
enumeration to solve various problems.
However, the number of subgraphs can be exponentially larger than the size of an input graph.
In such cases, it is unrealistic to enumerate all subgraphs explicitly.
To overcome this difficulty, we focus on \emph{implicit} enumeration algorithms~\cite{kawahara2017frontier,knuth2011art,sekine1995tutte}.
Such an algorithm constructs a \emph{decision diagram} (DD)~\cite{bryant1986graph,minato1993zero} representing the set of subgraphs instead of explicitly enumerating the subgraphs.
DDs are tractable representations of families of sets.
By regarding a subgraph of a graph as the subset of edges, DDs can represent a set of (edge-induced) subgraphs.
The efficiency of implicit enumeration algorithms does not directly depend on the number of subgraphs but rather on the size of the output DD~\cite{kawahara2017frontier}.
The size of a DD can be much (exponentially in some cases) smaller than the number of subgraphs, and thus, in such cases, we can expect that the implicit algorithms will work much faster than explicit ones.
The framework to construct a DD representing a set of constrained subgraphs is called \emph{frontier-based search} (FBS)~\cite{kawahara2017frontier}.
Recently, Kawahara et al.~\cite{kawahara2019colorful} proposed an extension of FBS, \emph{colurful FBS} (CFBS).
Especially, they proposed an algorithm to enumerate isomorphic subgraphs, that is, given graphs $G$ and $H$, it constructs a DD representing the set of subgraphs of $G$ that are isomorphic to $H$.

To extend types of subgraphs that can be implicitly enumerated, we focus on \emph{topological-minor-embeddings} (TM-embeddings)~\cite{diestel2012graph}.
A subgraph $G'$ of $G$ is a TM-embedding of $H$ in $G$ if $G'$ is isomorphic to a \emph{subdivision} of $H$.
A subdivision of $H$ is a graph obtained by replacing each edge in $H$ with a path.
A TM-embedding is a generalization of an isomorphic subgraph, where each edge must be replaced with a path with length one.
An application of TM-embeddings is \emph{forbidden topological minor characterization} (FTM-characterization)~\cite{diestel2012graph}.
For example, a graph is planar if and only if it contains neither $K_{5}$ nor $K_{3, 3}$ as a topological minor~\cite{diestel2012graph}, where $K_{a}$ is the complete graph with $a$ vertices and $K_{b, c}$ is the complete bipartite graph with the two parts of $b$ and $c$ vertices.
Other examples are shown in \tabref{tab:graph_classes} (see~\cite{brandstadt1999graph} for details).

\begin{table}[t]
  \centering
  \caption{Relationship between graph classes and forbidden topological minors. $K_4 - e$ is the graph obtained by removing an arbitrary edge from $K_4$.}
  \begin{tabular}{c|c}
    graph class & forbidden topological minors \\ \hline
    planar graphs & $K_5, K_{3, 3}$ \\
    outerplanar graphs & $K_4, K_{2, 3}$ \\
    series-parallel graphs & $K_4$ \\
    cactus graphs & $K_4 - e$ \\
  \end{tabular}
  \label{tab:graph_classes}
\end{table}

\paragraph*{Our contribution.}
In this paper, when graphs $G$ and $H$ are given, we propose an algorithm to implicitly enumerate all TM-embeddings of $H$ in $G$.
Using our algorithm and some additional DD operations, we can also implicitly enumerate subgraphs having FTM-characterizations, that is, planar, outerplanar, series-parallel, and cactus subgraphs.
Our contributions are:
\begin{itemize}
  \item Given graphs $G$ and $H$, we propose an algorithm to implicitly enumerate all TM-embeddings of $H$ in $G$. (\secref{sec:general})
  \item We propose more efficient algorithms when $H$ is in specific graphs that are used in FTM-characterizations of graph classes in \tabref{tab:graph_classes}, that is, complete graphs, complete bipartite graphs, and $K_4 - e$. (\secref{sec:profile})
  \item Combining our algorithm with DD operations, we show how to implicitly enumerate subgraphs having FTM-characterization, that is, planar, outerplanar, series-parallel, and cactus subgraphs. (\secref{sec:FTM-DD})
  \item We evaluate our method by computational experiments. We apply our method to implicitly enumerating all planar subgraphs in a graph. The results show that our method runs up to five orders of magnitude faster than a naive backtracking-based method. (\secref{sec:experiment})
\end{itemize}

\paragraph*{Our techniques.}
Our key contribution is to extend the algorithm for implicitly enumerating isomorphic subgraphs~\cite{kawahara2019colorful} to TM-embeddings.
We first explain the algorithm of Kawahara et al.~\cite{kawahara2019colorful} briefly.
They reduced isomorphic subgraph enumeration to enumeration of ``colored degree specified subgraphs''.
Let us consider $K_{2,3}$ in \figref{fig:K2x3}.
It has the degree multiset $\set{2^3, 3^2}$, where $2^3$ means that there are three vertices with degree 2.
The graph in \figref{fig:house} has the same degree multiset although it is not isomorphic to $K_{2,3}$.
Thus, the degree multiset is not enough to characterize $K_{2,3}$ uniquely.
Let us consider the edge-colored graph in \figref{fig:K2x3_color}.
For an edge-colored graph with $k$ colors, we consider a \emph{colored degree} of a vertex $v$, that is, a $k$-tuple of integers such that its $i$-th element is the number of color-$i$ edges incident to $v$.
The edge-colored graph in \figref{fig:K2x3_color} has the colored-degree multiset $M = \set{(3, 0), (0, 3), (1, 1)^3}$, where red and green are respectively colors 1 and 2.
Observe that every edge-colored graph with colored-degree multiset $M$ is isomorphic to the graph.
Therefore, enumerating subgraphs of $G$ that are isomorphic to $K_{2,3}$ is equivalent to finding all $2$-colored subgraphs of $G$ whose degree multisets equal $M$ and then ``decolorizing'' them.

In TM-embedding enumeration, the size of subgraphs we want to find is not fixed, which makes difficult to identify the subgraphs only by a colored degree multiset.
Therefore, we extend the algorithm of Kawahara et al.\ in two directions: (a) we allow some colored degrees to exist arbitrary many times and (b) we introduce the constraint that each colored subgraph is connected.
Using such constraints, we can reduce TM-embeddings enumeration to enumerating subgraphs satisfying the constraints.
We propose a framework to enumerate TM-embeddings based on such constraints that can be applied to \emph{every} query graph.
Since the complexity of CFBS heavily depends on the number of colors used in the constraint, we discuss how to reduce the number of colors.

\section{Preliminaries}\label{sec:preliminaries}

\subsection{Graphs and colored graphs}\label{sec:notation}
Let $G = (V(G), E(G))$ be a graph.
We define $n = \size{V(G)}$ and $m = \size{E(G)}$.
Graphs $G$ and $H$ are \emph{isomorphic} if there exists a bijection $\func{f}{V(G)}{V(H)}$ such that, for all $u, v \in V(G)$, $\edge{u}{v} \in E(G) \Leftrightarrow \edge{f(u)}{f(v)} \in E(H)$.
For a positive integer $k$, we define $[k] = \set{1, \dots, k}$. For a positive integer $c$ and a finite set $E$, a \emph{$c$-colored subset} over $E$ is a $c$-tuple $(D_1, \dots, D_c)$ such that (a) for each $i \in [c]$, $D_i \subseteq E$ holds and (b) for all $i, j \in [c]$, if $i \neq j$, then $D_i \cap D_j = \emptyset$.
A \emph{$c$-colored graph} $H^c$ is a tuple of a finite set $U$ and a $c$-colored subset $(D_1, \dots, D_c)$ over the set $\setST{\edge{u}{v}}{u, v \in U, u \neq v}$.
Here, $U$ is the \emph{vertex set} and $(D_1, \dots, D_c)$ is the \emph{colored edge set}.
The \emph{color-$i$ degree} of $u \in U$ in $H^c$ is the number of color-$i$ edges incident to $u$.
The \emph{colored degree} of $u$ in $H^c$ is a $c$-tuple $(\delta_1, \dots, \delta_c)$ of integers, where $\delta_i$ is the color-$i$ degree of $u$.
The \emph{colored degree multiset} of $H^c$, which is denoted by $\coloredDegreeSet{H^c}$, is the multiset of the colored degrees of all the vertices in $H^c$.
$H^c$ is a \emph{$c$-colorized graph} of $H$ if the underlying graph of $H^c$ is isomorphic to $H$, where the \emph{underlying graph} of $H^c$ is a graph obtained by ignoring the colors of the edges in $H^c$.
A \emph{$c$-colored subgraph} of $G$ is a $c$-colored graph whose underlying graph is isomorphic to a subgraph of $G$.

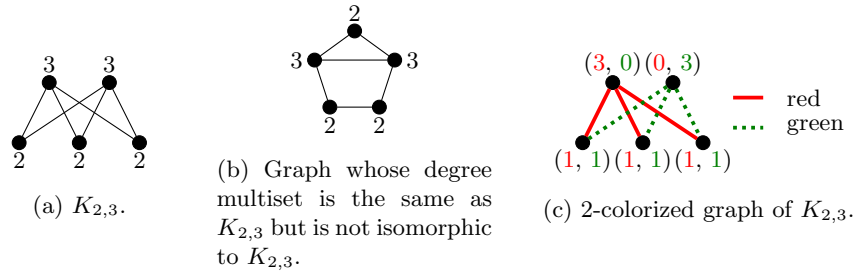
\begin{figure}[t]
  \centering
  \begin{subfigure}{.25\linewidth}
    \centering
    \begin{tikzpicture}[scale=0.8]
      \coordinate (ca) at (0.5, 1) {};
      \coordinate (cb) at (1.5, 1) {};
      \coordinate (cA) at (0.0, 0) {};
      \coordinate (cB) at (1.0, 0) {};
      \coordinate (cC) at (2.0, 0) {};

      \node[GraphNode] (a) at (ca) {};
      \node[GraphNode] (b) at (cb) {};
      \node[GraphNode] (A) at (cA) {};
      \node[GraphNode] (B) at (cB) {};
      \node[GraphNode] (C) at (cC) {};

      \node at ($(ca)+( 0.0,  0.3)$) {3};
      \node at ($(cb)+( 0.0,  0.3)$) {3};
      \node at ($(cA)+( 0.0, -0.3)$) {2};
      \node at ($(cB)+( 0.0, -0.3)$) {2};
      \node at ($(cC)+( 0.0, -0.3)$) {2};

      \draw (a) -- (A);
      \draw (a) -- (B);
      \draw (a) -- (C);
      \draw (b) -- (A);
      \draw (b) -- (B);
      \draw (b) -- (C);
    \end{tikzpicture}
    \caption{$K_{2,3}$.}
    \label{fig:K2x3}
  \end{subfigure}
  \hfill
  \begin{subfigure}{.30\linewidth}
    \centering
    \begin{tikzpicture}[scale=0.8]
      \coordinate (ca) at (0*72+18:0.7) {};
      \coordinate (cb) at (1*72+18:0.7) {};
      \coordinate (cc) at (2*72+18:0.7) {};
      \coordinate (cd) at (3*72+18:0.7) {};
      \coordinate (ce) at (4*72+18:0.7) {};

      \node[GraphNode] (a) at (ca) {};
      \node[GraphNode] (b) at (cb) {};
      \node[GraphNode] (c) at (cc) {};
      \node[GraphNode] (d) at (cd) {};
      \node[GraphNode] (e) at (ce) {};

      \node at ($(ca)+( 0.3,  0.0)$) {3};
      \node at ($(cb)+( 0.0,  0.3)$) {2};
      \node at ($(cc)+(-0.3,  0.0)$) {3};
      \node at ($(cd)+( 0.0, -0.3)$) {2};
      \node at ($(ce)+( 0.0, -0.3)$) {2};

      \draw (a) -- (b);
      \draw (a) -- (c);
      \draw (a) -- (e);
      \draw (b) -- (c);
      \draw (c) -- (d);
      \draw (d) -- (e);
    \end{tikzpicture}
    \caption{Graph whose degree multiset is the same as $K_{2,3}$ but is not isomorphic to $K_{2,3}$.}
    \label{fig:house}
  \end{subfigure}
  \hfill
  \begin{subfigure}{.40\linewidth}
    \centering
    \begin{tikzpicture}[scale=0.8]
      \coordinate (ca) at (0.5, 1) {};
      \coordinate (cb) at (1.5, 1) {};
      \coordinate (cA) at (0.0, 0) {};
      \coordinate (cB) at (1.0, 0) {};
      \coordinate (cC) at (2.0, 0) {};

      \node[GraphNode] (a) at (ca) {};
      \node[GraphNode] (b) at (cb) {};
      \node[GraphNode] (A) at (cA) {};
      \node[GraphNode] (B) at (cB) {};
      \node[GraphNode] (C) at (cC) {};

      \node at ($(ca)+( 0.0,  0.3)$) {(\myred{3}, \mygreen{0})};
      \node at ($(cb)+( 0.0,  0.3)$) {(\myred{0}, \mygreen{3})};
      \node at ($(cA)+( 0.0, -0.3)$) {(\myred{1}, \mygreen{1})};
      \node at ($(cB)+( 0.0, -0.3)$) {(\myred{1}, \mygreen{1})};
      \node at ($(cC)+( 0.0, -0.3)$) {(\myred{1}, \mygreen{1})};

      \draw[RedEdge] (a) to (A);
      \draw[RedEdge] (a) to (B);
      \draw[RedEdge] (a) to (C);
      \draw[GreenEdge] (b) to (A);
      \draw[GreenEdge] (b) to (B);
      \draw[GreenEdge] (b) to (C);

      \draw[RedEdge] (2.5, 0.75) -- (3.0, 0.75);
      \node[anchor = west] at (3.25, 0.75) {red};
      \draw[GreenEdge] (2.5, 0.25) -- (3.0, 0.25);
      \node[anchor = west] at (3.25, 0.25) {green};
    \end{tikzpicture}
    \caption{2-colorized graph of $K_{2,3}$.}
    \label{fig:K2x3_color}
  \end{subfigure}
  \caption{Graphs and a colored graph.}
  \label{fig:graphs}
\end{figure}

\subsection{Topological minors and characterizations of graphs}\label{sec:ftm}
\emph{Subdividing} an edge $\edge{u}{v}$ of a graph $H$ means removing the edge $\edge{u}{v}$ from $H$, introducing a new vertex $w$, and adding new edges $\edge{u}{w}$ and $\edge{v}{w}$.
If a graph is obtained by subdividing each edge of $H$ arbitrary times, it is a \emph{subdivision} of $H$.
A graph $F$ is \emph{homeomorphic} to a graph $H$ if $F$ is isomorphic to some subdivision of $H$.\footnote{In another definition, $F$ is homeomorphic to $H$ if some subdivision of $F$ is isomorphic to some subdivision of $H$. However, we allow subdividing only for $H$ because $H$ is ``contracted enough'' when it is a forbidden topological minor, that is, $H$ does not contain redundant vertices with degree 2.}
If a graph $F$ is homeomorphic to a graph $H$, the original vertices of $H$ are the \emph{branch vertices} of $F$ and the other vertices are the \emph{subdividing vertices}.
Note that the degree of a branch vertex equals the original degree in $H$ while the degree of a subdividing vertex is 2. (The degree of a branch vertex can be 2 when its original degree in $H$ is 2.)
For graphs $G$ and $H$, $H$ is a \emph{topological minor} (TM) of $G$ if $G$ contains a subgraph homeomorphic to $H$.
A subgraph $G'$ of $G$ is a \emph{TM-embedding} of $H$ in $G$ if $G'$ is homeomorphic to $H$.
For families $\mathcal{G}$ and $\mathcal{H}$ of graphs, $\mathcal{G}$ is \emph{forbidden-TM-characterized} (FTM-characterized) by $\mathcal{H}$ when a graph $G$ belongs to $\mathcal{G}$ if and only if, for any subgraph $G'$ of $G$ and $H \in \mathcal{H}$, $G'$ is not homeomorphic to $H$.
For example, the family of planar graphs is FTM-characterized by $\set{K_5, K_{3, 3}}$~\cite{kuratowski1930sur}.
The same characterization goes to several graph classes (\tabref{tab:graph_classes}).

\subsection{Decision diagram (DD)}\label{sec:dd}
\begin{figure}[t]
    \centering
    \begin{tikzpicture}[scale=0.8,
      NonTerminalNode/.style={
        draw,
        circle,
      },
      TerminalNode/.style={
        draw,
        rectangle,
      },
      ZeroArc/.style={
        black,
        line width = 0.5mm,
        dashed,
        ->,
      },
      OneArc/.style={
        red,
        line width = 0.5mm,
        ->,
      },
      TwoArc/.style={
        green!50!black,
        line width = 0.5mm,
        dotted,
        ->,
      },
    ]

      \node[NonTerminalNode] (11) at ( 0.0,  1.5) {1};
      \node[NonTerminalNode] (21) at (-1.0,  0.5) {2};
      \node[NonTerminalNode] (22) at ( 0.0,  0.5) {2};
      \node[NonTerminalNode] (23) at ( 1.0,  0.5) {2};
      \node[NonTerminalNode] (31) at (-1.0, -0.5) {3};
      \node[NonTerminalNode] (32) at ( 0.0, -0.5) {3};
      \node[NonTerminalNode] (33) at ( 1.0, -0.5) {3};
      \node[TerminalNode]   (top) at ( 0.0, -1.5) {$\top$};

      \draw[ZeroArc] (11) to (21);
      \draw[OneArc]  (11) to (22);
      \draw[TwoArc]  (11) to (23);
      \draw[OneArc]  (21) to (31);
      \draw[TwoArc]  (21) to (32);
      \draw[ZeroArc] (22) to (31);
      \draw[TwoArc]  (22) to (33);
      \draw[ZeroArc] (23) to (32);
      \draw[OneArc]  (23) to (33);
      \draw[TwoArc]  (31) to (top);
      \draw[OneArc]  (32) to (top);
      \draw[ZeroArc] (33) to (top);

      \draw[ZeroArc] ( 1.75,  0.75) to ( 2.75,  0.75);
      \node[anchor = west] at ( 2.75,  0.75) {0-arc};

      \draw[OneArc]  ( 1.75,  0.0) to ( 2.75,  0.0);
      \node[anchor = west] at ( 2.75,  0.0) {1-arc};

      \draw[TwoArc]  ( 1.75, -0.75) to ( 2.75, -0.75);
      \node[anchor = west] at ( 2.75, -0.75) {2-arc};

    \end{tikzpicture}
    \caption{3-DD. A square is a terminal node and circles are non-terminal nodes. An integer in a circle is the label of the node. For simplicity, we omit $\bot$ and the arcs pointing at it.}
    \label{fig:dd}
\end{figure}
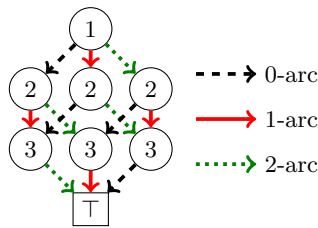

A \emph{$(c+1)$-decision diagram} ($(c+1)$-DD)~\cite{kawahara2019colorful} represents a family of $c$-colored subsets over a finite set.
We introduce it here briefly and describe it formally in \appref{app:dd_detail}.
Let $E = \set{e_1, \dots, e_m}$.
A $(c+1)$-DD over $E$ is a directed acyclic graph with the unique \emph{root node}\footnote{To avoid confusion, we use the terms ``node'' and ``arc'' for a $(c+1)$-DD and use ``vertex'' and ``edge'' for an input graph.} and the two \emph{terminal nodes} $\top$ and $\bot$.
Nodes other than the terminal nodes are the \emph{non-terminal nodes}.
Each non-terminal node has a \emph{label}, an integer in $[m]$, and $c+1$ arcs.
The arcs are the \emph{0-arc}, \emph{1-arc}, $\dots$, and \emph{$c$-arc}.
For each arc, the label of its tail is larger than its head if the tail is non-terminal.
In a $(c+1)$-DD, each path from the root node to $\top$ represents a $c$-colored subset over $E$.
For each path and $j \geq 1$, descending the $j$-arc of a non-terminal node with label $i$ corresponds to including $e_i$ in a colored subset as a color $j$ element.
Descending the $0$-arc corresponds to excluding $e_i$ from a colored subset.
The set of such paths corresponds to the family of $c$-colored subsets represented by the $(c+1)$-DD.
\figref{fig:dd} shows an example of a $3$-DD over a set $\set{e_1, e_2, e_3}$.
The 3-DD represents the family of the $2$-colored subsets each of which contains exactly one element for each color, that is,
$\{(\set{e_1}, \set{e_2})$,
$(\set{e_1}, \set{e_3})$,
$(\set{e_2}, \set{e_1})$,
$(\set{e_2}, \set{e_3})$,
$(\set{e_3}, \set{e_1})$,
$(\set{e_3}, \set{e_2})\}$.
In the rest of the paper, $\mathbf{Z}^{c+1}$ denotes a $(c+1)$-DD and $\DDToSet{\mathbf{Z}^{c+1}}$ denotes the family of $c$-colored subsets represented by $\mathbf{Z}^{c+1}$.
When $c = 1$, we omit the superscript from $\mathbf{Z}^{c+1}$ and write $\mathbf{Z}$, that is, $\mathbf{Z}$ is a 2-DD.

\subsection{Colorful frontier-based search (CFBS)}\label{sec:cfbs}
\emph{Colorful frontier-based search} (CFBS)~\cite{kawahara2019colorful} is a framework of algorithms to construct a DD representing the set of constrained subgraphs.
An important application of CFBS is the \emph{isomorphic subgraph enumeration problem}.
In the problem, given two graphs, a \emph{host graph} $G$ and a \emph{query graph} $H$, we output all subgraphs of $G$ that are isomorphic to $H$.
CFBS does not explicitly enumerate subgraphs but constructs a 2-DD representing the set of them.
By regarding a subgraph of $G$ as the subset of $E(G)$, 2-DD can represent a set of (edge-induced) subgraphs.

The crux of CFBS is to identify the query graph by a colored degree multiset.
\begin{definition}[profile]
  A multiset $M$ of $c$-colored degrees is a \emph{profile} of $H$ if the following are equivalent:
  \begin{itemize}
    \item A graph $F$ is isomorphic to $H$.
    \item There exists a $c$-colorized graph $F^c$ of $F$ that satisfies the constraint $\mathcal{C}_M$, where $\mathcal{C}_M$ is a function from $c$-colored graphs to $\set{0, 1}$ such that $\mathcal{C}_M(F^c) = 1 \Leftrightarrow \coloredDegreeSet{F^c} = M$.
  \end{itemize}
\end{definition}
For example, for $K_{2,3}$ in \figref{fig:K2x3}, $\set{(3, 0), (0, 3), (1, 1)^3}$ is a profile as shown in \figref{fig:K2x3_color}.
In the following, $\mathcal{G}^{c}(\mathcal{C})$ denotes the family of $c$-colored subgraphs of $G$ that satisfy the constraint $\mathcal{C}$.
$\mathcal{G}(H)$ denotes the family of subgraphs of $G$ that are isomorphic to $H$.
$\MDD{c+1}{\mathcal{C}}$ and $\DD{H}$ respectively denote the $(c+1)$-DD representing $\mathcal{G}^{c}(\mathcal{C})$ and the 2-DD representing $ \mathcal{G}(H)$.
The framework of algorithms for implicit isomorphic subgraph enumeration based on CFBS is as follows:
\begin{enumerate}
  \item Find a profile $M$ of $H$. Let $c$ be the number of the colors used in $M$.
  \item Construct $\mathbf{Z}^{c+1}(\mathcal{C}_M)$.
  \item By \emph{decolorizing} $\mathbf{Z}^{c+1}(\mathcal{C}_M)$, we obtain $\mathbf{Z}(H)$.
\end{enumerate}
Here, for a family $\mathcal{F}^c$ of $c$-colored subsets, its \emph{decolorization} is the family $\mathcal{F} = \setST{\bigcup_{i \in [c]} D_i}{(D_1, \dots, D_c) \in \mathcal{F}^c}$, that is, the family obtained by ignoring the colors from $\mathcal{F}^c$.
For DDs, the decolorization of the $(c+1)$-DD representing $\mathcal{F}^{c}$ is the 2-DD representing $\mathcal{F}$.
We can decolorize a DD by a recursive operation utilizing the recursive structure of the DD~\cite{kawahara2019colorful}.
To construct a DD efficiently, CFBS uses dynamic programming.
The \emph{$i$-th frontier} $F_i$ is the set of vertices incident to both the edges in $\set{e_1, \dots, e_{i-1}}$ and $\set{e_i, \dots, e_m}$.
CFBS constructs a DD in a breadth-first manner from the root node and merges two nodes with the same label and states with respect to the frontier.
See \cite{kawahara2019colorful} for details.

CFBS can deal with every finite query graph $H$ in isomorphic subgraph enumeration if we use $|E(H)|$ colors~\cite{kawahara2019colorful}.
However, since the complexity of CFBS exponentially depends on the number of colors~\cite{kawahara2019colorful}, it is important to find a profile with as few colors as possible.
The following theorem states that, for every graph $H$, it suffices to use $\vertexCover{H}$ colors, where $\vertexCover{H}$ is the minimum size of vertex covers of $H$.
For a graph $H$, a subset $S \subseteq V(H)$ is a \emph{vertex cover} of $H$ if, for every edge $e \in E(H)$, at least one of its endpoints belongs to $S$.
A star is a graph isomorphic to $K_{1, a}$ for some positive integer $a$.

\begin{theorem}[\cite{horiyama2019decomposing}]\label{th:unigraph}
  Let $H$ be a graph and $H^c$ be a $c$-colored graph of $H$ such that, for every $i \in [c]$, the subgraph of $H^c$ induced by color-$i$ edges is isomorphic to a star.
  A graph $F$ is isomorphic to $H$ if and only if there exists a $c$-colored graph $F^{c}$ of $F$ such that $\coloredDegreeSet{F^{c}} = \coloredDegreeSet{H^c}$.
  It follows that, for every $H$, there is a profile using $\vertexCover{H}$-colored degrees.
\end{theorem}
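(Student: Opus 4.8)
The plan is to prove the biconditional directly and then read off the vertex-cover bound as a corollary. Throughout I write $M = \coloredDegreeSet{H^c}$ and, for each color $i$, let $a_i$ denote the number of color-$i$ edges. Since $a_i$ equals half the sum of the color-$i$ entries over all colored degrees recorded in $M$, it is determined by $M$ alone. By hypothesis each color class of $H^c$ is a star $K_{1,a_i}$, so in $M$ the color-$i$ entries consist of one entry equal to $a_i$ together with $a_i$ entries equal to $1$ (and the rest $0$).

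The forward direction is routine: given an isomorphism $g$ from $H$ to $F$, I transport the coloring of $H^c$ along $g$, coloring $\edge{g(u)}{g(v)}$ with the color of $\edge{u}{v}$, to obtain a $c$-colored graph $F^c$ of $F$. Because $g$ preserves incidences and this assignment preserves colors, every vertex has the same colored degree as its image, whence $\coloredDegreeSet{F^c} = M$.

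The substance is the converse. First I would show that the star structure is \emph{forced} on $F^c$ by the degree data alone: for each color $i$ there are exactly $a_i$ color-$i$ edges, and some vertex carries $a_i$ of them, so all color-$i$ edges are incident to that vertex and (being distinct edges) have pairwise distinct other endpoints, each of color-$i$ degree $1$; hence the color-$i$ class of $F^c$ is again $K_{1,a_i}$. The key lemma is then that in \emph{any} colored graph whose color classes are such stars, adjacency of two vertices is a function of their colored degree vectors and of the $a_i$: two vertices are joined by a color-$i$ edge exactly when, for $a_i \ge 2$, one has color-$i$ degree $a_i$ and the other color-$i$ degree $1$, and, for $a_i = 1$, both have color-$i$ degree $1$. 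I verify both implications using that for $a_i \ge 2$ the center is the unique vertex of color-$i$ degree $a_i$ and is adjacent to every leaf, while for $a_i = 1$ there are exactly two vertices of color-$i$ degree $1$, namely the endpoints of the single edge. With this lemma applied to both $H^c$ and $F^c$, I fix any bijection $\phi$ from $V(H)$ to $V(F)$ identifying vertices with equal colored degree vectors (which exists since the two multisets coincide) and conclude that $\phi$ preserves adjacency in both directions, so $F \cong H$.

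The main obstacle is the degenerate case $a_i = 1$ inside this lemma: a single-edge color class has two vertices of equal color-$i$ degree, so the center/leaf distinction collapses and adjacency must instead be argued from the fact that exactly two vertices carry that color; relatedly, I must be careful that for $a_i \ge 2$ the center is genuinely the unique vertex of color-$i$ degree $a_i$, since this uniqueness is precisely what lets adjacency be read off from the colored degree vectors. Finally, for the ``it follows'' clause I would take a minimum vertex cover $S = \set{s_1, \dots, s_c}$ with $c = \vertexCover{H}$, order it, and color each edge by the smallest index of a cover vertex it contains; every color class is then a star centered at the corresponding $s_j$, and minimality of $S$ ensures each class is nonempty, so applying the biconditional to this coloring exhibits $M = \coloredDegreeSet{H^c}$ as a profile on $\vertexCover{H}$ colored degrees.
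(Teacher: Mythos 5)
The paper does not actually prove this statement: it is imported verbatim from the cited reference \cite{horiyama2019decomposing}, and the appendix of omitted proofs covers only Theorems~2--6, so there is no in-paper argument to compare against. Judged on its own, your proof is correct and self-contained. The two load-bearing steps are both handled properly: (i) the reconstruction step, where you show the colored degree multiset alone forces each color class of $F^c$ to be a star $K_{1,a_i}$ (using that $a_i$ is recoverable from $M$ by the handshake count and that a vertex of color-$i$ degree $a_i$ must absorb all $a_i$ color-$i$ edges, leaving each remaining endpoint with color-$i$ degree exactly $1$); and (ii) the key lemma that, in a star-decomposed colored graph, color-$i$ adjacency of a pair of vertices is a function of their color-$i$ degrees, with the case split between $a_i \geq 2$ (center identified as the unique vertex of color-$i$ degree exceeding $1$) and the degenerate $a_i = 1$ (both endpoints of the lone edge have degree $1$). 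That case split is exactly where a careless version of the lemma would fail, and you flag and resolve it. The derivation of the $\vertexCover{H}$-color profile via the ``smallest covering index'' coloring, with minimality of the cover guaranteeing each class is a nonempty star, is also sound. The only remark worth adding is that the source you are re-proving states a more general decomposition result (into unigraphs, of which stars are a special case); your direct argument buys an elementary, two-paragraph proof of precisely the special case this paper needs.
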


\section{Algorithms}\label{sec:algo}
Proofs of the theorems in this section are shown in \appref{app:proof}.

\subsection{Implicit enumeration of TM-embeddings}\label{sec:general}
Given graphs $G$ and $H$, we propose an algorithm to construct the 2-DD $\mathbf{Z}(\widehat{H})$ representing the set of all TM-embeddings of $H$ in $G$.
In the following, $\mathcal{S}(H)$ denotes the family of subdivisions of $H$.
In an algorithm to enumerate isomorphic subgraphs based on existing CFBS, finding a profile of $H$ is essential.
However, as for TM-embeddings, since $\mathcal{S}(H)$ is an infinite set, it seems difficult to identify $\mathcal{S}(H)$ with a single colored degree multiset.
Therefore, we define an \emph{extended profile} for an infinite family $\mathcal{S}(H)$ by extending a profile for a graph $H$.

Let $\Delta^c = \setST{\deltaTwo{i}}{i \in [c]}$, where $\deltaTwo{i} = (\delta_1, \dots, \delta_c), \delta_i = 2, j \neq i \Rightarrow \delta_j = 0$.
\begin{definition}[extended profile]\label{def:ext_profile}
  Let $c$ be a positive integer.
  A multiset $M$ of $c$-colored degrees is an \emph{extended profile} of $\mathcal{S}(H)$ if the following are equivalent:
  \begin{itemize}
    \item A graph $F$ belongs to $\mathcal{S}(H)$.
    \item There exists a $c$-colorized graph $F^c$ of $F$ that satisfies the constraint $\mathcal{C}^{*}_M$, where $\mathcal{C}^{*}_M(F^c) = 1$ if and only if (a) $\coloredDegreeSet{F^c}$ is obtained by adding an arbitrary number of elements (allowing duplication) of $\Delta^c$ to $M$ and (b) each colored subgraph of $F^c$ is connected.
  \end{itemize}
\end{definition}

Our method of implicit TM-embedding enumeration is written as follows:
\begin{enumerate}
  \item Find an extended profile $M$ of $\mathcal{S}(H)$. Let $c$ be the number of colors in $M$.
  \item Construct $\MDD{c+1}{\mathcal{C}^{*}_M}$.
  \item By decolorizing $\MDD{c+1}{\mathcal{C}^{*}_M}$, we obtain $\DD{\widehat{H}}$.
\end{enumerate}
Decolorization in Step~3 can be done in the same way as existing CFBS.
To construct $\MDD{c+1}{\mathcal{C}^{*}_M}$ in Step~2, we add the constraints ``there are an arbitrary number of vertices whose degrees are in $\Delta^c$'' and ``each colored subgraph is connected'' to existing CFBS.
In fact, we propose an algorithm for a more general problem.
For a multiset $s$ and a set $t$ of $c$-colored degrees, let $\mathcal{C}_s^t$ be the corresponding constraint where we replace $\Delta^c$ by $t$ in the definition of $\mathcal{C}_s^*$.
In other words, for each $\delta \in s$, the number of $\delta$ in a subgraph must equal its multiplicity in $s$.
In addition, there can be an arbitrary number of vertices whose colored degrees are in $t$.
Given $s$ and $t$, we propose an algorithm to construct a DD $\MDD{c+1}{\mathcal{C}_s^t}$.
We show pseudocode in \appref{app:fbs}.

To assess the efficiency of algorithms based on CFBS, it is usual to analyze the \emph{width} of the output DD~\cite{nakahata2018IEICE,sekine1995tutte}.
The width of a DD is the maximum number of nodes with the same label.
It is a measure of both the size of the DD and the time complexity to construct the DD.
Let $w = \max_{i \in [m]} \size{F_i}$.
$\mathbb{N}$ denotes the set of non-negative integers.
For a $c$-tuple $\delta$, $\delta_i$ denotes the $i$-th element of $\delta$.
For $c$-tuples $\delta$ and $\gamma$ of integers, we define $\delta \leq \gamma$ if, for all $i \in [c]$, $\delta_i \leq \gamma_i$ holds.
When $\delta \leq \gamma$, we say that $\delta$ is \emph{dominated} by $\gamma$.
For a multiset $s$ and a set $t$ of $c$-tuples, $s \cup t$ denotes the set of tuples that appear in $s$ or $t$. (When a tuple is contained in $s$, its multiplicity is ignored in $s \cup t$.)
For a set $M$ of $c$-colored degrees, $\down{M}$ denotes the set of tuples in $\mathbb{N}^c$ that are dominated by a tuple in $M$, that is, $\down{M} = \setST{\chi \in \mathbb{N}^c}{\exists \delta \in M, \chi \leq \delta}$.
For $\delta \in s$, the multiplicity of $\delta$ in $s$ is denoted by $s(\delta)$.
\begin{theorem}\label{th:width}
  Given a multiset $s$ and a set $t$ of $c$-colored degrees, there is an algorithm to construct a DD $\MDD{c+1}{\mathcal{C}_s^t}$ with width
  \begin{equation}\label{eq:width}
    2^{\bigO{cw\log w}} \size{\down{s \cup t}}^{w} \prod_{\delta \in s} (s(\delta) + 1).
  \end{equation}
\end{theorem}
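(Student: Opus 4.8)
The plan is to analyze the frontier-based search (CFBS) that constructs $\MDD{c+1}{\mathcal{C}_s^t}$ and to bound the number of nodes sharing any fixed label, since this maximum is by definition the width. In CFBS the edges $e_1, \dots, e_m$ are processed in order, and a node at level $i$ is identified with a \emph{state}: a summary of the partial colored subgraph on $\set{e_1, \dots, e_{i-1}}$ carrying exactly the information needed to decide, for every colouring of the remaining edges $\set{e_i, \dots, e_m}$, whether the completed graph satisfies $\mathcal{C}_s^t$. Two partial subgraphs with the same state have identical sets of valid completions and may therefore be merged into a single node. Hence the width is at most the number of reachable states at one level, and I would bound this by a product of three essentially independent contributions, one per clause of $\mathcal{C}_s^t$.

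First I would isolate what the state must record for the colored-degree and multiplicity clauses. For the degrees it suffices to store, for each vertex $v$ of the current frontier $F_i$, its colored degree accumulated so far (a $c$-tuple); a vertex that has already left the frontier has its final colored degree fixed and contributes nothing further. Because colored degrees only increase as edges are added, a frontier vertex whose current colored degree is \emph{not} dominated by any tuple in $s \cup t$ can never reach a legal final degree, so such a configuration is routed to $\bot$ and need not be stored; every retained frontier vertex therefore carries a colored degree lying in $\down{s \cup t}$. This bounds the degree part of the state by $\size{\down{s \cup t}}^{\size{F_i}} \le \size{\down{s \cup t}}^{w}$. To enforce the multiplicity clause I would additionally maintain, for each $\delta \in s$, a counter recording how many vertices have so far left the frontier with final colored degree exactly $\delta$; this counter may be capped at $s(\delta)$ (any larger value is immediately infeasible), so it takes one of $s(\delta)+1$ values, contributing $\prod_{\delta \in s}(s(\delta)+1)$. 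Degrees in $t$ need no counter, only a local feasibility check when a vertex leaves the frontier.

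Third, and this is where the factor $2^{\bigO{cw\log w}}$ appears, I would handle the connectivity clause by the standard frontier technique of tracking, \emph{separately for each colour} $i \in [c]$, the partition of the frontier vertices into the connected components of the color-$i$ subgraph built so far, together with the bounded bookkeeping needed to detect and forbid the premature sealing of a component (so that an accepted configuration cannot leave a color-$i$ subgraph disconnected). The number of partitions of a set of at most $w$ elements is the Bell number $B_w$, and since $\log B_w = \Theta(w \log w)$ we have $B_w = 2^{\bigO{w\log w}}$; the constant-size auxiliary bookkeeping per colour only affects the hidden constant. Taking the product over the $c$ colours gives $\bigl(2^{\bigO{w\log w}}\bigr)^{c} = 2^{\bigO{cw\log w}}$ connectivity states. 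Since a full state is an independent choice of a degree assignment, the multiplicity counters, and the $c$ per-colour partitions, the number of reachable states, and hence the width, is at most the product of the three bounds, which is exactly the claimed width \eqref{eq:width}.

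The main obstacle I expect is the connectivity argument. Unlike the degree and multiplicity conditions, connectivity is a \emph{global} property, so I must argue carefully that the frontier partition together with the sealing/rejection rule is genuinely \emph{sufficient} state to guarantee that every colored subgraph of an accepted configuration is connected, and that the bookkeeping never escapes the $2^{\bigO{cw\log w}}$ bound. The secondary point requiring care is the merging justification common to all three clauses, namely that two configurations with identical summaries really do admit the same completions, so that no information beyond the recorded state affects feasibility; this follows the established CFBS pattern but should be spelled out to close the gap between "number of states'' and "width''.
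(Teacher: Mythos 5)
Your proposal is correct and follows essentially the same route as the paper's proof: the paper bounds the width by counting the possible values of the configuration $(\degr, \dn, \comp, \done)$, obtaining $\size{\down{s \cup t}}^{w}$ for the frontier degree array, $\prod_{\delta \in s}(s(\delta)+1)$ for the multiplicity counters, $2^{\bigO{cw\log w}}$ for the $c$ per-colour frontier partitions, and a $2^{c}$ factor for the completion flags that is absorbed into the exponential term, exactly matching your three contributions plus bookkeeping. The only difference is that you spell out the state-sufficiency and connectivity-sealing arguments that the paper delegates to the standard frontier-based-search literature.
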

The theorem shows that the complexity of the algorithm mainly depends on $w$ and $c$.
Although $w$ is determined by the host graph $G$,
we can reduce the value of $c$ by finding an extended profile of the query graph $H$ using as few colors as possible.
We discuss how to find such an extended profile for general $H$ in the rest of this subsection and for specific graphs in the right column in \tabref{tab:graph_classes} in the next subsection.

We discuss the complexity for general $H$.
We assume that $H$ has at least two vertices.
Similarly to CFBS for isomorphic subgraph enumeration, we show that there is an extended profile for every graph $H$ using $\size{E(H)}$ and $\vertexCover{H}$ colors.
Recall that $\vertexCover{H}$ is the minimum size of vertex covers of $H$.
Although the latter is better in most cases, we show both theorems for comparison.

\begin{theorem}\label{th:edge}
  For a graph $H$, let $H^{|E(H)|}$ be a $|E(H)|$-colorized graph obtained by coloring the edges of $H$ with distinct colors.
  Then, $M = \coloredDegreeSet{H^{|E(H)|}}$ is an extended profile of $\subdivision{H}$.
  There is an algorithm to construct a DD $\MDD{|E(H)|+1}{\mathcal{C}_M^*}$ with width
  \begin{equation}\label{eq:width_edge_naive}
    2^{\bigO{|E(H)|w\log w} + |V(H)|} \left(2^{|E(H)|} + |E(H)|\right)^w.
  \end{equation}
\end{theorem}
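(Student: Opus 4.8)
The plan is to establish two separate claims: that $M = \coloredDegreeSet{H^{\size{E(H)}}}$ is an extended profile of $\subdivision{H}$ in the sense of \defref{def:ext_profile}, and that the width bound \eqref{eq:width_edge_naive} follows from \theoref{th:width} by direct counting. Throughout I set $c = \size{E(H)}$, so that each color class of $H^c$ is a single edge; consequently every colored degree in $M$ lies in $\set{0,1}^{c}$, the multiset $M$ records exactly which edges are incident to each vertex of $H$, and $M$ is disjoint from $\Delta^{c}$ (whose elements have an entry equal to $2$). I will also use that, for each color $i$, precisely two elements of $M$ (counted with multiplicity) have a $1$ in coordinate $i$, namely the two endpoints of the edge $e_i$ colored $i$.

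For the forward direction, given $F \in \subdivision{H}$ I would color it naturally: if $F$ arises by subdividing edge $e_i$ into a path, assign color $i$ to every edge of that path. Then each color-$i$ subgraph is a single path and hence connected, establishing condition~(b); each branch vertex keeps the colored degree it had in $H^{c}$, since the colors of its incident edges are unchanged, so these vertices contribute exactly $M$; and each subdividing vertex on the color-$i$ path has colored degree $\deltaTwo{i} \in \Delta^{c}$. Thus $\coloredDegreeSet{F^{c}}$ is $M$ together with some copies of elements of $\Delta^{c}$, giving condition~(a).

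The reverse direction is the main obstacle and needs a structural analysis. Suppose $F^{c}$ satisfies $\mathcal{C}^{*}_{M}$. By condition~(a), and since $M \cap \Delta^{c} = \emptyset$, every vertex has colored degree either in $M$ (the \emph{branch vertices}, whose colored degrees realize $M$ exactly) or equal to some $\deltaTwo{i}$. Fix a color $i$: exactly two branch vertices have color-$i$ degree $1$, every $\deltaTwo{i}$-vertex has color-$i$ degree $2$, and all other vertices have color-$i$ degree $0$; because $M$ forces a color-$i$ edge to exist, condition~(b) makes the color-$i$ subgraph a connected graph of maximum degree $2$ with exactly two degree-$1$ vertices, i.e.\ a nonempty path $P_i$ whose endpoints are branch vertices and whose internal vertices are $\deltaTwo{i}$-vertices. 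Since a $\deltaTwo{i}$-vertex has color-$j$ degree $0$ for $j \neq i$, the paths $P_1,\dots,P_c$ are internally disjoint and meet only at branch vertices, so $F$ is the subdivision of the graph $\bar H$ defined on the branch vertices with one edge per color $i$ joining the endpoints of $P_i$. To finish I exhibit an isomorphism $\bar H \cong H$: as the colored degrees of the branch vertices form the multiset $M$, choose a colored-degree-preserving bijection $\phi$ onto $V(H)$; an endpoint of $P_i$ has color-$i$ degree $1$, so its $\phi$-image is an endpoint of $e_i$, whence $\phi$ carries the $\bar H$-edge from $P_i$ onto $e_i$. This gives a bijection between the $c$ edges that is consistent with $\phi$ on vertices, so $\phi$ is an isomorphism and $F \in \subdivision{H}$.

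For the width I would apply \theoref{th:width} with $s = M$ and $t = \Delta^{c}$, so that $\mathcal{C}_s^{t} = \mathcal{C}^{*}_{M}$ and the number of colors is $c = \size{E(H)}$, and then bound the three factors in \eqref{eq:width}. The factor $2^{\bigO{cw\log w}}$ matches directly. For $\prod_{\delta \in s}(s(\delta)+1)$ I use $k+1 \le 2^{k}$ for $k \ge 1$, so the product is at most $2^{\sum_{\delta} s(\delta)} = 2^{\size{V(H)}}$, which supplies the $+\size{V(H)}$ in the exponent. Finally, since $\down{M \cup \Delta^{c}} = \down{M} \cup \down{\Delta^{c}}$ with $\down{M} \subseteq \set{0,1}^{c}$, the only tuples of $\down{M \cup \Delta^{c}}$ outside $\set{0,1}^{c}$ are the $c$ tuples $\deltaTwo{i}$; hence $\size{\down{s \cup t}} \le 2^{c} + c = 2^{\size{E(H)}} + \size{E(H)}$, and raising to the $w$ and multiplying the three bounds yields exactly \eqref{eq:width_edge_naive}.
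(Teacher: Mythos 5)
Your proof is correct and follows essentially the same route as the paper's: the forward direction colors each subdividing path with the color of its originating edge, the reverse direction shows each color class is a path meeting the others only at branch vertices and recovers $H$ from the path endpoints, and the width bound is obtained by plugging $\size{\down{s\cup t}} \le 2^{\size{E(H)}}+\size{E(H)}$ and $\prod_{\delta\in s}(s(\delta)+1)\le 2^{\size{V(H)}}$ into \theoref{th:width}. Your reverse direction is in fact more carefully argued than the paper's (which only remarks that shared endpoints correspond to vertices with two unit coordinates), but it is the same underlying argument.
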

\begin{theorem}\label{th:vertex}
  For a graph $H$, let $H^{\vertexCover{H}}$ be a $\vertexCover{H}$-colorized graph whose each colored subgraph is isomorphic to a star and the set of the centers is a minimum vertex cover of $H$.
  Then, $M = \coloredDegreeSet{H^{\vertexCover{H}}}$ is an extended profile of $\mathcal{S}(H)$.
  There is an algorithm to construct a DD $\MDD{\vertexCover{H}+1}{\mathcal{C}_M^*}$ with width
  \begin{equation}\label{eq:width_vertex}
    2^{\bigO{\vertexCover{H} w\log w} + |V(H)|} \left(2^{\vertexCover{H}}\vertexCover{H}\right)^w.
  \end{equation}
\end{theorem}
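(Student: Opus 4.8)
The plan is to prove the two assertions separately: first that $M = \coloredDegreeSet{H^{\vertexCover{H}}}$ is an extended profile of $\subdivision{H}$ in the sense of \defref{def:ext_profile}, and then that the width estimate follows by instantiating \theoref{th:width}. Throughout I will write $c = \vertexCover{H}$, let $S = \set{s_1, \dots, s_c}$ be the minimum vertex cover whose elements are the star centers, and record that in $H^{c}$ every color-$i$ edge is incident to $s_i$, so the color-$i$ subgraph is the star $K_{1, d_i}$, where $d_i$ is the color-$i$ degree of $s_i$; note $\sum_{i} d_i = \size{E(H)}$ since each edge carries exactly one color, and every non-center vertex is a leaf in each color it meets, hence has all coordinates in $\set{0,1}$.

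For the forward direction I would suppose $F \in \subdivision{H}$ and assume $F$ is an actual subdivision of $H$. Each edge of $F$ lies on the path subdividing a unique edge $e$ of $H$, and I color it with the color $e$ carries in $H^{c}$. A branch vertex then keeps exactly the colored degree of its preimage in $H^{c}$, while every subdividing vertex is interior to a monochromatic path and so receives colored degree $\deltaTwo{i}$ for the relevant color $i$. Hence $\coloredDegreeSet{F^{c}}$ equals $M$ together with a multiset of elements of $\Delta^c$, giving condition (a), and each color-$i$ subgraph is a subdivided star and therefore connected, giving condition (b).

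The converse is the crux. Given a $c$-colorized $F^{c}$ with $\mathcal{C}_M^*(F^{c}) = 1$, I would fix any partition of the vertices into an $M$-part whose colored-degree multiset is exactly $M$ and a remaining part all of whose colored degrees lie in $\Delta^c$; such a partition exists because (a) holds. I first argue that each color-$i$ subgraph is a subdivision of $K_{1,d_i}$: by (b) it is connected, and reading its degree sequence off $M$ and the added $\Delta^c$-elements gives one vertex of degree $d_i$, exactly $d_i$ vertices of degree $1$, and the rest of degree $2$; counting shows the number of edges equals the number of vertices minus one, so the component is a tree and hence a spider. Suppressing every vertex of the $\Delta^c$-part (each has degree two inside a single color) is the inverse of subdivision and turns every color class back into a star, producing a colored graph ${H'}^{c}$ with star color classes and $\coloredDegreeSet{{H'}^{c}} = M = \coloredDegreeSet{H^{c}}$. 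By \theoref{th:unigraph} the underlying graph of ${H'}^{c}$ is isomorphic to $H$, and since $F$ is obtained from it by subdivision, $F \in \subdivision{H}$. The delicate point, which I expect to be the main obstacle, is that the $M$-/$\Delta^c$-partition need not be unique when $M$ meets $\Delta^c$ (precisely when some center has $d_i = 2$); I will verify that every admissible partition still yields star color classes with colored-degree multiset $M$, so relabeling a degree-two center as a subdividing vertex is harmless.

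Finally, for the width I would apply \theoref{th:width} with $s = M$ and $t = \Delta^c$, so that $\mathcal{C}_M^* = \mathcal{C}_{M}^{\Delta^c}$, and bound the three factors of \eqref{eq:width}. The leading factor is $2^{\bigO{\vertexCover{H} w \log w}}$ directly. For the product, each distinct $\delta$ occurs at least once so $M(\delta)+1 \le 2^{M(\delta)}$, whence $\prod_{\delta \in M}(M(\delta)+1) \le 2^{\sum_{\delta} M(\delta)} = 2^{\size{V(H)}}$, the sum running over distinct degrees and counting all vertices. For $\size{\down{M \cup \Delta^c}}$, every tuple of $M \cup \Delta^c$ has at most one coordinate exceeding $1$, so the down-set consists of the box $\set{0,1}^{c}$ together with tuples having a single large coordinate, contributing at most $\sum_i (d_i - 1)2^{\,c-1}$; thus $\size{\down{M \cup \Delta^c}} = \bigO{\size{E(H)}\, 2^{\vertexCover{H}}}$. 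The only subtlety is that the factor $\size{E(H)}^{w}$ is not itself of the form $(2^{\vertexCover{H}}\vertexCover{H})^{w}$; I absorb it by the elementary estimate $w \log \size{E(H)} \le \bigO{\vertexCover{H} w \log w} + \size{V(H)}$ (using $\size{E(H)} \le \size{V(H)}^2$ and splitting on whether $\size{V(H)} \le w$), after which the product of the three bounds is at most \eqref{eq:width_vertex}. The width estimate is then a routine assembly once this absorption is recorded.
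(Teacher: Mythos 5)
Your treatment of the extended-profile claim is essentially the paper's own: the forward direction colors each edge of $F$ by the color of the $H$-edge it subdivides, and the converse identifies each color class as a connected spider via its degree multiset, suppresses the degree-two vertices to recover a star decomposition with colored-degree multiset $M$, and invokes \theoref{th:unigraph}; the ``delicate point'' you flag (a star center with $d_i=2$ and no other color incidences being indistinguishable from a subdividing vertex) is exactly what the paper handles by retaining one degree-two vertex in a color class precisely when $\coloredDegreeSet{H^{\vertexCover{H}}}$ demands one, and your remark that any admissible relabelling yields an isomorphic result is a valid alternative resolution. Where you genuinely diverge is the width bound, and there your accounting is the more careful one. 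The paper asserts $\size{\down{s \cup t}} \leq 2^{\vertexCover{H}}\vertexCover{H}$, but the superset it exhibits in \eqref{eq:D_vertex_1} has up to $\vertexCover{H}\,\size{V(H)}\,2^{\vertexCover{H}-1}$ elements, and a maximum-degree factor is silently dropped (already for $H = K_{1,b}$ one has $\down{s \cup t} = \set{(0), \dots, (b)}$ of size $b+1$, not $2$); note that the analogous count in \theoref{th:complete_bipartite} correctly retains the factor $b$. Your bound $\size{\down{M \cup \Delta^c}} = \bigO{\size{E(H)}\,2^{\vertexCover{H}}}$ plus the absorption of $\size{E(H)}^{w}$ into $2^{\bigO{\vertexCover{H} w \log w} + \size{V(H)}}$ is what actually justifies the stated formula \eqref{eq:width_vertex}. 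The one slip is the case split: splitting on $\size{V(H)} \leq w$ leaves the range $w < \size{V(H)} \leq w^2$ unhandled; split instead on $\size{V(H)} \leq w^2$ (then $w\log\size{E(H)} \leq 4w\log w$) versus $\size{V(H)} > w^2$ (then $w < \sqrt{\size{V(H)}}$ gives $w\log\size{E(H)} = \bigO{\size{V(H)}}$). With that repair your route delivers \eqref{eq:width_vertex} by a derivation that survives scrutiny, at the cost of one extra elementary estimate.
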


\subsection{Constraints for forbidden topological minors}\label{sec:profile}
We derive specific extended profiles for the subdivisions of the graphs in the right column of \tabref{tab:graph_classes}: complete graphs, complete bipartite graphs, and $K_4 - e$.
While the results for complete bipartite graphs and $K_4 - e$ follow directly from \theoref{th:vertex}, we can reduce one color for complete graphs.
In the following, we discuss complete bipartite graphs first, which is easier than complete graphs.

\begin{theorem}\label{th:complete_bipartite}
  Let $a, b\ (a \leq b)$ be positive integers.
  A multiset $M_{a, b} = M_{a, b}^1 \cup M_{a, b}^2$ consisting of $a$-colored degrees is an extended profile of $\mathcal{S}(K_{a, b})$, where
  \begin{equation}
    M_{a, b}^1 = \left\{ (\delta_1, \dots, \delta_a) \;\middle|\;
      \begin{array}{l}
        \exists i \in [a], \delta_i = b, \\
          j \neq i \Rightarrow \delta_j = 0
      \end{array}
    \right\}, \quad
    M_{a, b}^2 = \set{ (\underbrace{1, \dots, 1}_{a})^b }.
  \end{equation}
  There is an algorithm to construct a DD $\MDD{a+1}{\mathcal{C}_{M_{a,b}}^*}$ with width
  \begin{equation}\label{eq:width_complete_bipartite}
    2^{\bigO{aw\log w}} (2^a + ab)^w (b+1).
  \end{equation}
\end{theorem}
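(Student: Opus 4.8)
The plan is to derive both halves of the statement from the general machinery already in place: the extended-profile claim from \theoref{th:vertex}, and the width bound from \theoref{th:width} applied to the concrete constraint $\mathcal{C}_{M_{a,b}}^{*}$, which is exactly $\mathcal{C}_s^t$ with $s = M_{a,b}$, $t = \Delta^a$, and $c = a$.

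First I would exhibit the colorization witnessing the profile. Write the bipartition of $K_{a,b}$ as $A \cup B$ with $\size{A} = a \le b = \size{B}$, and color every edge by the index in $[a]$ of its endpoint in $A$. Since each edge has exactly one endpoint in $A$, this is a well-defined $a$-colorization, and the color-$i$ subgraph is the star $K_{1,b}$ centered at the $i$-th vertex of $A$ with leaf set $B$; hence every colored subgraph is a star and the set of centers is precisely $A$, which is a minimum vertex cover because $\vertexCover{K_{a,b}} = a$ when $a \le b$. Reading off the colored degrees, each center contributes the tuple with $b$ in its own coordinate and $0$ elsewhere (these are exactly the elements of $M_{a,b}^1$), while each of the $b$ vertices of $B$ meets one edge of every color and so contributes $(1,\dots,1)$ (these give $M_{a,b}^2$). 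Thus $\coloredDegreeSet{K_{a,b}^a} = M_{a,b}$, and \theoref{th:vertex} immediately yields that $M_{a,b}$ is an extended profile of $\subdivision{K_{a,b}}$.

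For the width I would substitute $s = M_{a,b}$ and $t = \Delta^a$ into the bound \eqref{eq:width} of \theoref{th:width}. The multiplicity product is routine: the $a$ tuples of $M_{a,b}^1$ each occur once and the single tuple of $M_{a,b}^2$ occurs $b$ times, so $\prod_{\delta \in s}(s(\delta)+1) = 2^a(b+1)$, while the leading factor is $2^{\bigO{aw\log w}}$ since $c = a$. The one genuine computation is $\size{\down{M_{a,b} \cup \Delta^a}}$. Here I would first note that each subdividing tuple in $\Delta^a$ (value $2$ on one coordinate) is dominated by the corresponding branch tuple of $M_{a,b}^1$ (value $b \ge 2$ on the same coordinate), so $\Delta^a$ adds nothing and $\down{M_{a,b} \cup \Delta^a} = \down{M_{a,b}}$. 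The remaining down-set splits into the Boolean cube $\set{0,1}^a$ dominated by $(1,\dots,1)$ and the $a$ single-axis segments $\setST{k\,\mathbf{1}_i}{0 \le k \le b}$ dominated by the $M_{a,b}^1$ tuples; an inclusion--exclusion over their shared points (the origin and the $a$ unit vectors) gives $2^a + ab - a \le 2^a + ab$. Combining the three factors, and absorbing the extra $2^a$ from the product into the exponential $2^{\bigO{aw\log w}}$, yields the claimed width $2^{\bigO{aw\log w}}(2^a + ab)^w (b+1)$.

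The only delicate point is the cardinality of the down-set, and within it the observation that the subdivision degrees are already dominated by the branch degrees of the large part; this is what keeps the base of the dominant $w$-th power at $2^a + ab$ rather than something larger, and it is the structural reason the complete-bipartite case behaves so well. Everything else is bookkeeping: checking that the coloring is well-defined, reading off the two degree classes, and evaluating the product of multiplicities.
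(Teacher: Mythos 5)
Your proof is correct and follows essentially the same route as the paper's: decompose $K_{a,b}$ into $a$ stars centered at the smaller part, read off $M_{a,b}^1$ and $M_{a,b}^2$, and invoke \theoref{th:vertex} for the profile, then substitute $s = M_{a,b}$, $t = \Delta^a$ into \theoref{th:width} for the width. The only (immaterial) difference is that you count $\size{\down{s \cup t}}$ exactly by inclusion--exclusion as $2^a + ab - a$, whereas the paper simply bounds it by $\size{\down{M_{a,b}^1}} + \size{\down{M_{a,b}^2}} = 2^a + ab$.
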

\figref{fig:complete_bipartite} shows a representation of a subdivision of $K_{3,3}$ based on \theoref{th:complete_bipartite}.

Next, we consider the subdivisions of complete graphs.
Since the size of a minimum vertex cover of $K_a$ is $a-1$, there exists an extended profile of $\mathcal{S}(K_a)$ with $a-1$ colors by \theoref{th:vertex}.
The extended profile is obtained by decomposing $K_{a}$ into $K_{1, 1}, K_{1, 2}, \dots,$ and $K_{1, a-1}$ and coloring the subgraphs with distinct colors.
In this coloring, if we color $K_{1, 2}$ with the same color as $K_{1, 1}$, the obtained subgraph is $K_{3}$.
We show that the colored degree multiset obtained from this coloring is also an extended profile of $\mathcal{S}(K_{a})$.

\begin{theorem}\label{th:complete}
  Let $a \geq 3$ be an integer.
  A multiset $M_{a-2} = M_{a-2}^1 \cup M_{a-2}^2$ consisting of $(a-2)$-colored degrees is an extended profile of $\mathcal{S}(K_a)$, where
  \begin{equation}
    \mkern-18mu M_{a-2}^1 = \set{ {(2, \underbrace{1, \dots, 1}_{a-3})}^3 },
    M_{a-2}^2 = \left\{(\delta_1, \dots, \delta_{a-2}) \;\middle|\;
      \begin{array}{l}
        \exists i \in \set{2, \dots, a-2}, \\
        j < i \Rightarrow \delta_j = 0, \\
        \delta_i = i+1, \\
        j > i \Rightarrow \delta_j = 1
      \end{array}
    \right\}.
  \end{equation}
  There is an algorithm to construct a DD $\MDD{a-1}{\mathcal{C}_{M_a}^*}$ with width
  \begin{equation}\label{eq:width_complete}
    2^{\bigO{aw\log w}} \left(3 \cdot 2^{a-2} - a\right)^w.
  \end{equation}
\end{theorem}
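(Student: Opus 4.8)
The plan is to exhibit an explicit $(a-2)$-coloring of $K_a$ whose colored degree multiset is $M_{a-2}$, then prove the two directions of \defref{def:ext_profile} by hand (the merged color~1 is a triangle, not a star, so \theoref{th:vertex} does not apply to it directly), and finally read off the width from \theoref{th:width}.

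First I would fix the coloring. Label $V(K_a)=\{v_1,\dots,v_a\}$ and take the star decomposition in which the star centered at $v_i$ is the set of edges from $v_i$ to $v_{i+1},\dots,v_a$ (a $K_{1,a-i}$), for $i=1,\dots,a-1$. I assign color~1 to the union of the two smallest stars, centered at $v_{a-2}$ and $v_{a-1}$, which is exactly the triangle on $\{v_{a-2},v_{a-1},v_a\}$, and I give a distinct color $k\in\{2,\dots,a-2\}$ to each remaining star, indexing so that the star centered at $v_{a-1-k}$ receives color $k$. A direct incidence count then shows that each of the three triangle vertices $v_{a-2},v_{a-1},v_a$ gets colored degree $(2,1,\dots,1)$, while the center $v_{a-1-k}$ of color $k$ gets the tuple with $0$'s before position $k$, value $k+1$ at position $k$, and $1$'s after position $k$. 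Hence the colored degree multiset of this colorized $K_a$ is exactly $M_{a-2}=M_{a-2}^1\cup M_{a-2}^2$, and the $3+(a-3)=a$ vertices account for all branch vertices.

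For the forward implication, given any $F\in\mathcal{S}(K_a)$ I lift this coloring along the subdivision: each edge of $K_a$ becomes a path in $F$, and I color every edge of that path with the color of the original edge. Then each branch vertex retains its $K_a$-colored degree (a tuple of $M_{a-2}$), each subdividing vertex has total degree $2$ with both incident edges of a single color (a tuple of $\Delta^{a-2}$), and every color class is a subdivided triangle or subdivided star, hence connected; so $\mathcal{C}^{*}_{M_{a-2}}(F^{a-2})=1$. The backward implication is the crux, and here the triangle in color~1 is the only part not covered by \theoref{th:vertex}. Suppose $F^{a-2}$ satisfies $\mathcal{C}^{*}_{M_{a-2}}$; projecting its colored degree multiset onto each coordinate fixes the degree sequence of every color class. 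In color~1 every non-isolated vertex has degree $2$, so by connectivity the class is a single cycle; the only branch vertices of nonzero color-$1$ degree are the three $(2,1,\dots,1)$-vertices $u_1,u_2,u_3$, so smoothing the degree-$2$ subdividing vertices yields a triangle on $\{u_1,u_2,u_3\}$. For $k\in\{2,\dots,a-2\}$ the color-$k$ class has one vertex of degree $k+1$, exactly $k+1$ vertices of degree $1$, and the rest of degree $2$; counting edges against vertices gives $|E|=|V|-1$, so by connectivity it is a tree, and having a unique vertex of degree $>2$ forces it to be a subdivided star $K_{1,k+1}$ with center the $M^2$-vertex $w_k$ and leaves the color-$k$-degree-$1$ vertices $\{u_1,u_2,u_3,w_2,\dots,w_{k-1}\}$.

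I then smooth every color class and take the union; I expect the main obstacle to be checking that this union is precisely $K_a$ on the $a$ branch vertices. Color~1 supplies the three edges among $u_1,u_2,u_3$, and color $k$ supplies the edges from $w_k$ to $u_1,u_2,u_3,w_2,\dots,w_{k-1}$, so each unordered pair of branch vertices is realized exactly once (a pair $\{w_k,w_l\}$ with $l<k$ only through the larger index $k$), and the pair-count $3+3(a-3)+\binom{a-3}{2}=\binom{a}{2}$ confirms that all edges appear. Since every branch vertex has total degree $a-1\ge 3$ for $a\ge 4$ (the case $a=3$ is the degenerate identity ``cycle $=$ subdivided triangle''), smoothing suppresses only subdividing vertices, so $F$ is $K_a$ with its edges subdivided, i.e. $F\in\mathcal{S}(K_a)$. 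Finally, for the width I apply \theoref{th:width} with $c=a-2$, $s=M_{a-2}$, $t=\Delta^{a-2}$: the factor $\prod_{\delta\in s}(s(\delta)+1)$ equals $4\cdot 2^{a-3}=2^{a-1}$ (multiplicity $3$ for the $(2,1,\dots,1)$-tuple and $1$ for each of the $a-3$ tuples of $M_{a-2}^2$), which is absorbed into $2^{\bigO{aw\log w}}$, and a direct lattice-point count gives $\size{\down{s\cup t}}=3\cdot 2^{a-2}-a$, yielding width $2^{\bigO{aw\log w}}(3\cdot 2^{a-2}-a)^w$ as claimed.
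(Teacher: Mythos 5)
Your proof is correct, and it matches the paper on the decomposition (a triangle plus stars $K_{1,3},\dots,K_{1,a-1}$ with centers on a near-minimum vertex cover), on the forward direction (lift the coloring along the subdivision), and on the width bound (same computation of $\size{\down{s \cup t}} = 3\cdot 2^{a-2}-a$ and $\prod_{\delta\in s}(s(\delta)+1)=2^{a-1}$ via \theoref{th:width}). The one place you genuinely diverge is the converse. The paper proves it by induction on $a$: assuming the claim for $M_{a-2}$, it takes an $(a-1)$-colored graph satisfying $\mathcal{C}^*_{M_{a-1}}$, observes that the projection onto colors $1,\dots,a-2$ satisfies $\mathcal{C}^*_{M_{a-2}}$ and hence yields a subdivision of $K_a$ by the inductive hypothesis, identifies the last color class as a subdivided $K_{1,a}$ whose leaves are exactly the $a$ branch vertices, and merges. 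You instead argue non-inductively: you identify \emph{every} color class at once (color $1$ is a $2$-regular connected graph, hence a cycle through the three $M^1$-vertices; color $k$ is a connected graph with $\size{E}=\size{V}-1$, hence a tree, hence a subdivided star), smooth, and then verify by the pair count $3+3(a-3)+\binom{a-3}{2}=\binom{a}{2}$ that the union is exactly $K_a$ on the $a$ branch vertices. Both routes rest on the same structural fact that degree sequence plus connectivity pins down each color class; your direct assembly makes explicit which branch vertices serve as leaves of which stars and why no pair of branch vertices is joined twice, a point the induction handles implicitly through the "merging" step, at the cost of your having to do the global bookkeeping that the inductive hypothesis would otherwise absorb.
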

\figref{fig:complete} shows a representation of a subdivision of $K_{5}$ based on \theoref{th:complete}.

\begin{figure}[t]
  \centering
  \begin{subfigure}{.49\linewidth}
    \centering
    \begin{tikzpicture}[scale=0.8]
      \coordinate (ca) at (0, 2) {};
      \coordinate (cb) at (1, 2) {};
      \coordinate (cc) at (2, 2) {};
      \coordinate (cA) at (0, 0) {};
      \coordinate (cB) at (1, 0) {};
      \coordinate (cC) at (2, 0) {};

      \node[GraphNode] (a) at (ca) {};
      \node[GraphNode] (b) at (cb) {};
      \node[GraphNode] (c) at (cc) {};
      \node[GraphNode] (A) at (cA) {};
      \node[GraphNode] (B) at (cB) {};
      \node[GraphNode] (C) at (cC) {};

      \node at ($(ca)+(-0.5,  0.4)$) {(\myred{3}, \mygreen{0}, \myblue{0})};
      \node at ($(cb)+( 0.0,  0.4)$) {(\myred{0}, \mygreen{3}, \myblue{0})};
      \node at ($(cc)+( 0.5,  0.4)$) {(\myred{0}, \mygreen{0}, \myblue{3})};
      \node at ($(cA)+(-0.5, -0.4)$) {(\myred{1}, \mygreen{1}, \myblue{1})};
      \node at ($(cB)+( 0.0, -0.4)$) {(\myred{1}, \mygreen{1}, \myblue{1})};
      \node at ($(cC)+( 0.5, -0.4)$) {(\myred{1}, \mygreen{1}, \myblue{1})};

      \foreach \currentType/\currentCenter in {RedEdge/a, GreenEdge/b, BlueEdge/c}{
        \foreach \currentLeaf in {A, B, C}{
          \draw[\currentType] (\currentCenter) to (\currentLeaf);
        }
      }

      \node[SubdivisionNode] at ($(ca) !0.33! (cA)$) {};
      \node[SubdivisionNode] at ($(ca) !0.67! (cA)$) {};
      \node[SubdivisionNode] at ($(ca) !0.33! (cB)$) {};
      \node[SubdivisionNode] at ($(ca) !0.2! (cC)$) {};
      \node[SubdivisionNode] at ($(cb) !0.2! (cA)$) {};
      \node[SubdivisionNode] at ($(cb) !0.67! (cA)$) {};
      \node[SubdivisionNode] at ($(cb) !0.67! (cB)$) {};
      \node[SubdivisionNode] at ($(cb) !0.67! (cC)$) {};
      \node[SubdivisionNode] at ($(cc) !0.2! (cA)$) {};
      \node[SubdivisionNode] at ($(cc) !0.33! (cB)$) {};
      \node[SubdivisionNode] at ($(cc) !0.5! (cC)$) {};

      \coordinate (green_anchor) at ($(cC)+(1,1)$);
      \coordinate (red_anchor)   at ($(green_anchor)+(0,  0.5)$);
      \coordinate (blue_anchor)  at ($(green_anchor)+(0, -0.5)$);

      \foreach \currentType/\currentAnchor/\currentLabel in {RedEdge/red_anchor/red, GreenEdge/green_anchor/green, BlueEdge/blue_anchor/blue}{
        \draw[\currentType] (\currentAnchor) to ($(\currentAnchor)+(1,0)$);
        \node[anchor = west] at ($(\currentAnchor)+(1.25,0)$) {\currentLabel};
      }
    \end{tikzpicture}
    \caption{Representation of a subdivision of $K_{3,3}$ based on \theoref{th:complete_bipartite}.}
    \label{fig:complete_bipartite}
  \end{subfigure}
  \hfill
  \begin{subfigure}{.49\linewidth}
    \centering
    \begin{tikzpicture}[scale=0.8]
      \coordinate (cd) at (0*72+18:1);
      \coordinate (ce) at (1*72+18:1);
      \coordinate (ca) at (2*72+18:1);
      \coordinate (cb) at (3*72+18:1);
      \coordinate (cc) at (4*72+18:1);

      \node[GraphNode] (d) at (cd) {};
      \node[GraphNode] (e) at (ce) {};
      \node[GraphNode] (a) at (ca) {};
      \node[GraphNode] (b) at (cb) {};
      \node[GraphNode] (c) at (cc) {};

      \node at ($(0,0) !1.8! (cd)$) {(\myred{0}, \mygreen{3}, \myblue{1})};
      \node at ($(0,0) !1.3! (ce)$) {(\myred{0}, \mygreen{0}, \myblue{4})};
      \node at ($(0,0) !1.8! (ca)$) {(\myred{2}, \mygreen{1}, \myblue{1})};
      \node at ($(0,0) !1.4! (cb)$) {(\myred{2}, \mygreen{1}, \myblue{1})};
      \node at ($(0,0) !1.4! (cc)$) {(\myred{2}, \mygreen{1}, \myblue{1})};

      \node[SubdivisionNode] at ($(ca) !0.5! (cb)$) {};
      \node[SubdivisionNode] at ($(cb) !0.5! (cc)$) {};
      \node[SubdivisionNode] at ($(cc) !0.5! (cd)$) {};
      \node[SubdivisionNode] at ($(cd) !0.5! (ce)$) {};
      \node[SubdivisionNode] at ($(ce) !0.33! (ca)$) {};
      \node[SubdivisionNode] at ($(ce) !0.67! (ca)$) {};
      \node[SubdivisionNode] at ($(ca) !0.2! (cc)$) {};
      \node[SubdivisionNode] at ($(ca) !0.8! (cc)$) {};
      \node[SubdivisionNode] at ($(cb) !0.5! (cd)$) {};
      \node[SubdivisionNode] at ($(cb) !0.75! (ce)$) {};
      \node[SubdivisionNode] at ($(cc) !0.75! (ce)$) {};

      \draw[RedEdge] (a) to (b);
      \draw[RedEdge] (b) to (c);
      \draw[RedEdge] (c) to (a);
      \draw[GreenEdge] (d) to (a);
      \draw[GreenEdge] (d) to (b);
      \draw[GreenEdge] (d) to (c);
      \draw[BlueEdge] (e) to (a);
      \draw[BlueEdge] (e) to (b);
      \draw[BlueEdge] (e) to (c);
      \draw[BlueEdge] (e) to (d);
    \end{tikzpicture}
    \caption{Representation of a subdivision of $K_{5}$ based on \theoref{th:complete}.}
    \label{fig:complete}
  \end{subfigure}
  \caption{Representations of subdivisions of $K_{3,3}$ and $K_5$. In each figure, a filled and non-filled vertex represent a branch and subdividing vertex, respectively. A tuple beside a vertex means the colored degree of the vertex.}
  \label{fig:subdivisions}
\end{figure}
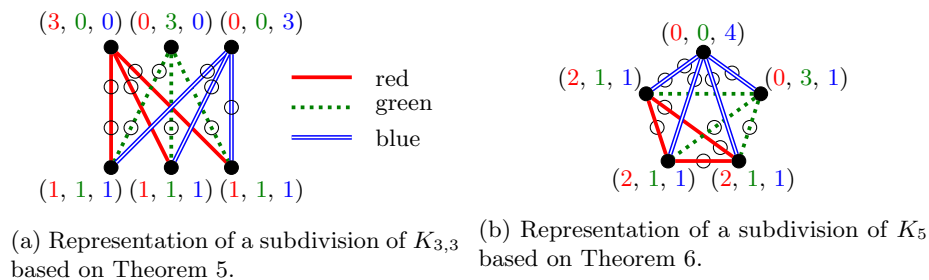

\subsection{Enumerating subgraphs having FTM-characterizations}\label{sec:FTM-DD}
We show how to implicitly enumerate subgraphs having FTM-characterization.
We combine DD operations with our algorithm to implicitly enumerate TM-embeddings.
\textsc{Union}~\cite{minato1993zero} is a function whose inputs are two 2-DDs $\mathbf{Z}_1$ and $\mathbf{Z}_2$ and output is the 2-DD representing $\DDToSet{\mathbf{Z}_1} \cup \DDToSet{\mathbf{Z}_2}$.
\textsc{NonSupset}~\cite{knuth2011art} is a function whose input is a 2-DD $\mathbf{Z}$ over a finite set $E$ and output is the 2-DD representing the family $\setST{A \subseteq 2^E}{\forall B \in \DDToSet{\mathbf{Z}}, A \not\supseteq B}$.
$\mathcal{G}(\widehat{H})$ denotes the set of subgraphs of $G$ that are homeomorphic to $H$ and $\mathbf{Z}(\widehat{H})$ denotes the 2-DD representing $\mathcal{G}(\widehat{H})$.
The following algorithm constructs the 2-DD representing the set of subgraphs of $G$ that is FTM-characterized by $\mathcal{H}$.
\begin{enumerate}
  \item Initialize a 2-DD $\ZDDSubd$ by the 2-DD representing the empty set.
  \item Choose an arbitrary graph $H$ from $\mathcal{H}$ and remove it from $\mathcal{H}$.
  \item Update $\ZDDSubd$ by ${\textsc{Union}}(\ZDDSubd, \mathbf{Z}(\widehat{H}))$.
  \item If $\mathcal{H}$ is not empty, go back to Step~2. If empty, go on to Step~5.
  \item We obtain the final 2-DD $\ZDDAns$ by ${\textsc{NonSupset}}(\ZDDSubd)$.
\end{enumerate}
For example, let us consider the case where we want to implicitly enumerate all planar subgraphs of $G$.
In this case, $\mathcal{H}$ is $\set{K_5, K_{3,3}}$.
We construct $\mathbf{Z}(\widehat{K_5})$ and $\mathbf{Z}(\widehat{K_{3,3}})$ and take their union, which is $\ZDDSubd$.
Now $\ZDDSubd$ represents the set of all subgraphs of $G$ that are homeomorphic to $K_5$ or $K_{3,3}$.
$\ZDDAns = {\textsc{NonSupset}}(\ZDDSubd)$ represents the family of all subgraphs of $G$ that is FTM-characterized by $\mathcal{H} = \set{K_5, K_{3,3}}$.
Therefore, $\ZDDAns$ represents the family of all planar subgraphs of $G$.
Other types of subgraphs such as outerplanar, series-parallel, and cactus subgraphs can be implicitly enumerated only by changing $\mathcal{H}$ according to \tabref{tab:graph_classes}.

\section{Computational experiments}\label{sec:experiment}
\subsection{Settings}\label{sec:exp_setting}
We conducted two experiments.
First, we compared several methods to enumerate planar subgraphs.
Second, we applied our framework to enumerate all types of subgraphs in \tabref{tab:graph_classes}.
We show the second results in \appref{app:exp_subgraphs}.
For input graphs, we used complete graphs $K_n$ and $3 \times b$ king graphs $X_{3, b}$ as synthetic data.
$X_{3, b}$ is a graph obtained by, to the $3 \times b$ grid graph, adding diagonal edges in all the cycles with length four.
As real data, we used Rome graph\footnote{\url{http://www.graphdrawing.org/data.html}}, which is often used in studies on graph drawing.
We implemented all the code in C++ and compiled them by g++5.4.0 with -O3 option.
To handle DDs, we used TdZdd~\cite{iwashita2013efficient} and SAPPORO\_BDD inside Graphillion~\cite{inoue2016graphillion}.
We used a machine with Intel Xenon E5-2637 v3 CPU and 1~TB RAM.
For each case, we set the timeout to one day.

\subsection{Comparing several methods to enumerate planar subgraphs}\label{sec:exp_methods}
We compare the following three methods for planar subgraph enumeration.
\begin{itemize}
  \item {\backtrack}: It explicitly enumerates subgraphs based on backtracking. The details are described in \appref{app:backtrack}.
  \item {\ddedge}: It implicitly enumerates subgraphs using DDs. It uses $|E(H)|$ colors based on \theoref{th:edge}. In other words, it uses ten colors for $\mathcal{S}(K_5)$ and nine colors for $\mathcal{S}(K_{3,3})$.
  \item {\ddvertex}: It implicitly enumerates subgraphs using DDs. It uses $\vertexCover{H}$ colors based on Theorems~\ref{th:vertex}--\ref{th:complete}. In other words, it uses three colors both for $\mathcal{S}(K_5)$ and $\mathcal{S}(K_{3,3})$.
\end{itemize}
As a subroutine of {\backtrack}, we used a planarity test in C++ Boost\footnote{\url{https://www.boost.org/doc/libs/1_71_0/libs/graph/doc/boyer_myrvold.html}}.
For fairness, {\backtrack} does not output solutions but only counts the number of solutions.
{\ddedge} and {\ddvertex} construct DDs representing the set of solutions.
Once a DD is constructed, we can count the number of solutions in linear time to the number of nodes in the DD~\cite{knuth2011art}.

\tabref{tab:exp_compare} shows the experimental results.
In all the cases, all the methods output the same number of solutions.
Among the three methods, {\ddvertex} ran fastest except for $K_6$.
{\backtrack} finished in a day only when the number of solutions is small (less than $10^9$).
Although {\ddedge} solved more instances than {\backtrack}, it ran out of memory when the size of input or the number of solutions grows.
In contrast, {\ddvertex} succeeded even for such instances.
For example, for $X_{3,4}$, {\ddvertex} is 122,544 and 187 times faster than {\backtrack} and {\ddedge}.
In addition, for $X_{3,500}$, {\ddvertex} succeeded in implicitly enumerating $7.95 \times 10^{1349}$ planar subgraphs only in $405.04$ seconds (less than seven minutes).
These results demonstrate the outstanding efficiency of {\ddvertex}.

\begin{table*}[t]
\centering
\caption{Experimental results. Each column shows the name of graphs, the number of vertices and edges, the running time of the three methods (in seconds), and the number of planar subgraphs. ``T/O'' and ``M/O'' mean time out and memory out, respectively.
``-'' means all the methods failed. The number of solutions for $K_{10}$ is from OEIS A066537, which is marked by `*'. We write the fastest time for each input graph in bold.}
\label{tab:exp_compare}
\scalebox{1.0}{
\begin{tabular}{l|rr|rrr|r}
graph & $|V|$ & $|E|$ & {\backtrack} & {\ddedge} & {\ddvertex} & \# solutions \\ \hline
$K_5$ & 5 & 10 & $\mathbf{<0.01}$ & 0.14  & $\mathbf{<0.01}$ & 1023 \\
$K_6$ & 6 & 15 & $\mathbf{<0.01}$ & 2.12  & 0.21  & 32071 \\
$K_7$ & 7 & 21 & 28.28  & 35.02  & \textbf{2.73}  & 1823707 \\
$K_8$ & 8 & 28 & 3113.64  & 620.84  & \textbf{66.34}  & 163947848 \\
$K_9$ & 9 & 36 & T/O & 15623.11  & \textbf{4694.41}  & 20402420291 \\
$K_{10}$ & 10 & 45 & T/O & T/O & T/O & *3209997749284 \\ \hline

$X_{3,4}$ & 12 & 29 & 11029.38  & 16.83  & \textbf{0.09}  & $ 5.33 \times 10^{8} $ \\
$X_{3,5}$ & 15 & 38 & T/O & 53.93  & \textbf{1.67}  & $ 2.70 \times 10^{11} $ \\
$X_{3,10}$ & 30 & 83 & T/O & 665.65  & \textbf{5.62}  & $ 8.93 \times 10^{24} $ \\
$X_{3,50}$ & 150 & 443 & T/O & M/O & \textbf{37.28}  & $ 1.29 \times 10^{133} $ \\
$X_{3,100}$ & 300 & 893 & T/O & M/O & \textbf{76.99}  & $ 2.03 \times 10^{268} $ \\
$X_{3,500}$ & 1500 & 4493 & T/O & M/O & \textbf{405.04}  & $7.95 \times 10^{1349} $ \\
$X_{3,1000}$ & 3000 & 8993 & T/O & M/O & M/O & - \\ \hline

$G_1$ (grafo1764.20) & 20 & 25 & 792.16  & 1.09  & \textbf{0.06}  & $ 3.35 \times 10^{7} $ \\
$G_2$ (grafo1760.28) & 28 & 39 & T/O & 96.81  & \textbf{3.76}  & $ 5.49 \times 10^{11} $ \\
$G_3$ (grafo10000.38) & 38 & 52 & T/O & 787.98 & \textbf{29.43}  & $ 4.50 \times 10^{15} $ \\
$G_4$ (grafo10008.42) & 42 & 61 & T/O & 38647.96 & \textbf{668.15}  & $ 2.30 \times 10^{18} $ \\
$G_5$ (grafo1378.46) & 46 & 62 & T/O & M/O & \textbf{796.48}  & $ 4.61 \times 10^{18} $ \\
$G_6$ (grafo1395.61) & 61 & 78 & T/O & M/O & \textbf{11992.12}  & $ 3.02 \times 10^{23} $ \\
$G_7$ (grafo5287.61) & 61 & 88 & T/O & M/O & M/O & - \\
$G_8$ (grafo9798.76) & 76 & 91 & T/O & M/O & \textbf{1709.64}  & $ 2.48 \times 10^{27} $ \\
$G_9$ (grafo10006.98) & 98 & 136 & T/O & M/O & M/O & - \\ \hline
\end{tabular}
}
\end{table*}

\section{Conclusion}\label{sec:conclusion}
Given graphs $G$ and $H$, we have proposed a method to implicitly enumerate topological-minor-embeddings of $H$ in $G$ using decision diagrams.
We also have shown a useful application of our method to enumerating subgraphs characterized by forbidden topological minors, that is, planar, outerplanar, series-parallel, and cactus subgraphs.
Computational experiments show that our method can find all planar subgraphs up to 122,544 times faster than a naive backtracking-based method and could solve more problems than the backtracking-based method.
Future work is extending our method from topological minors to general minors.

%
%
%
\bibliographystyle{splncs04}

\begin{thebibliography}{10}
\providecommand{\url}[1]{\texttt{#1}}
\providecommand{\urlprefix}{URL }
\providecommand{\doi}[1]{https://doi.org/#1}

\bibitem{brandstadt1999graph}
Brandst\"{a}dt, A., Le, V.B., Spinrad, J.P.: Graph Classes: A Survey. Society
  for Industrial and Applied Mathematics (1999)

\bibitem{bryant1986graph}
Bryant, R.E.: Graph-based algorithms for boolean function manipulation. IEEE
  Transactions on Computers  \textbf{C-35}(8),  677--691 (1986)

\bibitem{diestel2012graph}
Diestel, R.: Graph Theory, 4th Edition, Graduate texts in mathematics,
  vol.~173. Springer (2012)

\bibitem{hopcroft1974efficient}
Hopcroft, J., Tarjan, R.: Efficient planarity testing. J. ACM  \textbf{21}(4),
  549--568 (1974)

\bibitem{horiyama2019decomposing}
Horiyama, T., Kawahara, J., ichi Minato, S., Nakahata, Y.: Decomposing a graph
  into unigraphs (2019), arXiv:1904.09438

\bibitem{inoue2016graphillion}
Inoue, T., Iwashita, H., Kawahara, J., Minato, S.: Graphillion: software
  library for very large sets of labeled graphs. International Journal on
  Software Tools for Technology Transfer  \textbf{18}(1),  57--66 (2016)

\bibitem{iwashita2013efficient}
Iwashita, H., Minato, S.: Efficient top-down {ZDD} construction techniques
  using recursive specifications. TCS Technical Reports
  \textbf{TCS-TR-A-13-69},  1--28 (2013)

\bibitem{kawahara2017frontier}
Kawahara, J., Inoue, T., Iwashita, H., Minato, S.: Frontier-based search for
  enumerating all constrained subgraphs with compressed representation. IEICE
  Transactions on Fundamentals of Electronics, Communications and Computer
  Sciences  \textbf{E100-A}(9),  1773--1784 (2017)

\bibitem{kawahara2019colorful}
Kawahara, J., Saitoh, T., Suzuki, H., Yoshinaka, R.: Colorful frontier-based
  search: Implicit enumeration of chordal and interval subgraphs. In:
  Proceedings of Special Event on Analysis of Experimental Algorithms (SEA$^2$
  2019) (to appear)

\bibitem{knuth2011art}
Knuth, D.E.: The art of computer programming, Vol. 4A, Combinatorial
  algorithms, Part 1. Addison-Wesley Professional, 1st edn. (2011)

\bibitem{kuratowski1930sur}
Kuratowski, C.: Sur le probl{\`e}me des courbes gauches en topologie.
  Fundamenta Mathematicae  \textbf{15}(1),  271--283 (1930)

\bibitem{minato1993zero}
Minato, S.: Zero-suppressed {BDD}s for set manipulation in combinatorial
  problems. In: Proceedings of the 30th ACM/IEEE design automation conference
  (DAC 1993). pp. 272--277. ACM, New York, NY, USA (1993)

\bibitem{nakahata2018IEICE}
Nakahata, Y., Kawahara, J., Horiyama, T., Kasahara, S.: Enumerating all
  spanning shortest path forests with distance and capacity constraints. IEICE
  Transactions on Fundamentals of Electronics, Communications and Computer
  Sciences  \textbf{E101-A}(9),  1363--1374 (2018)

\bibitem{sekine1995tutte}
Sekine, K., Imai, H., Tani, S.: Computing the {Tutte} polynomial of a graph of
  moderate size. In: Proceedings of the 6th International Symposium on
  Algorithms and Computation (ISAAC 1995). pp. 224--233. Springer Berlin
  Heidelberg (1995)

\bibitem{yoshinaka2012finding}
Yoshinaka, R., Saitoh, T., Kawahara, J., Tsuruma, K., Iwashita, H., Minato, S.:
  Finding all solutions and instances of numberlink and slitherlink by {ZDD}s.
  Algorithms  \textbf{5}(2),  176--213 (2012)

\end{thebibliography}

\appendix

\section{Formal description of decision diagrams}\label{app:dd_detail}
Let $E$ be a finite set consisting of $m$ elements $e_1, \dots, e_m$.
A \emph{$(c+1)$-DD over $E$} is a rooted directed acyclic graph $\mathbf{Z}^{c+1} = (N, A, \ell)$, where $N$ is the set of nodes, $A \subseteq \setST{(\alpha, \beta)}{\alpha, \beta \in N, \alpha \neq \beta}$ is the set of (directed) arcs, and $\func{\ell}{N}{[m+1]}$ is a labeling function for nodes.
To avoid confusion, we use the terms ``node'' and ``arc'' for a $(c+1)$-DD and use ``vertex'' and ``edge'' for an input graph.
In addition, we represent a node of a $(c+1)$-DD using the Greek alphabet (e.g., $\alpha$, $\beta$) and a vertex of a graph using the English alphabet (e.g., $u$, $v$).
There is an exactly one \emph{root node} $\rho(\mathbf{Z}^{c+1})$ in $N$ whose indegree is zero.
In addition, $N$ has exactly two \emph{terminal nodes} $\bot$ and $\top$ whose outdegrees are zero.
Nodes other than the terminal nodes are called \emph{non-terminal nodes}.
Each node has the \emph{label} $\lbl{\alpha} \in [m+1]$.
If $\alpha$ is a non-terminal node, its label is an integer in $[m]$.
If $\alpha$ is a terminal node, its label is $m+1$.
Each non-terminal node $\alpha$ has exactly $c+1$ arcs emanating from $\alpha$.
The arcs are called the \emph{$0$-arc}, \emph{$1$-arc}, $\dots$, and \emph{$c$-arc} of $\alpha$.
For an integer $j \in \set{0, \dots, c}$, $\alpha_j$ denotes the node pointed at by the $j$-arc of $\alpha$.
For each non-terminal node $\alpha$, $\lbl{\alpha_j} = \lbl{\alpha} + 1$ or $\lbl{\alpha_j} = m+1$ holds.
That is, $\alpha_j$ is the node whose label is one more than that of $\alpha$ or a terminal node.
It follows that $\mathbf{Z}^{c+1}$ does not contain a directed cycle.

In a $(c+1)$-DD, each directed path $\pi$ corresponds to a $c$-colored subset $\pathToSet{\pi} = (B_1, \dots, B_c)$ over $E$, where, for each $j \in [c]$,
\begin{equation}
  B_j = \setST{e_{\lbl{\beta}}}{\pi\ \mathrm{contains\ the}\ j\mathrm{\mbox{-}th\ arc\ of}\ \beta}.
\end{equation}
A $(c+1)$-DD $\mathbf{Z}^{c+1}$ represents the family
\begin{equation}
  \DDToSet{\mathbf{Z}^{c+1}} = \setST{\pathToSet{\pi}}{\pi\ \mathrm{is\ a\ directed\ path\ from}\ \rho(\mathbf{Z}^{c+1})\ \mathrm{to}\ \top}
\end{equation}
of $c$-colored subsets.
\figref{fig:dd} shows an example of a $3$-DD over a set $\set{e_1, e_2, e_3}$.
The 3-DD represents the family of the $2$-colored subsets that each contains exactly one element for each color, that is,
$\{(\set{e_1}, \set{e_2})$,
$(\set{e_1}, \set{e_3})$,
$(\set{e_2}, \set{e_1})$,
$(\set{e_2}, \set{e_3})$,
$(\set{e_3}, \set{e_1})$,
$(\set{e_3}, \set{e_2})\}$.

\section{Details of our algorithm}\label{app:fbs}
In this section, we discuss the detail of our algorithm.
Kawahara et al.~\cite{kawahara2019colorful} proposed an algorithm to construct a $(c+1)$-DD representing the set of $c$-colored subgraphs whose colored degree multiset coincides to a given multiset $s$.
We extend the algorithm to deal with connectivities of each colored subgraph and colored degrees that are allowed to exist arbitrary many.
The algorithm of Kawahara et al.\ defines configuration as a tuple of two arrays $\degr$ and $\dn$.
The first array $\degr$ is an array of colored degrees of the vertices in the frontier.
For a vertex $v$ and an integer $j' \in [c]$, $\degr[v]$ and $\degr[v][j']$ respectively denotes the colored degree of $v$ and the color-$j'$ degree of $v$.
The second array $\dn$ is an array of the numbers of fixed vertices having each colored degree in $s$.
For a colored degree $\delta \in s$, $\dn[\delta]$ denotes the number of fixed vertices having colored degree $\delta$.

We propose an algorithm to the following problem: Given $s$ and $t$, construct a DD $\MDD{c+1}{\mathcal{C}_s^t}$.
To solve the problem, we extend the configurations of the algorithm of Kawahara et al.\ by two new arrays $\comp$ and $\done$.
The array $\comp$ manages the connectivity of vertices in the frontier in each colored subgraph.
For color $j' \in [c]$, $\comp[j']$ is a partition of the frontier such that two vertices $u, v$ are connected if and only if they are contained in the same set in $\comp[j']$.
The another array $\done$ holds flags that which colored subgraphs are completed.
For color $j' \in [c]$, $\done[j'] = \true$ if and only if connected color-$j'$ subgraph is completed.
We associate $(\degr, \dn, \comp, \done)$ as a configuration with each node.
Pseudocode is shown in Algorithms~\ref{alg:construct} and \ref{alg:update}.
Algorithm~\ref{alg:construct} initializes the configurations of the root node and constructs a DD based on the framework of FBS.
The subroutine $\textsc{Child}$ is a function whose inputs are a node $\alpha$ and an integer $j \in \set{0, \dots, c}$ and output is $\alpha_j$.
It is shown in Algorithm~\ref{alg:update}.
The correctness of our extention immedietly follows from the algorithms for connectivity in ordinary FBS~\cite{yoshinaka2012finding}.

\begin{algorithm}[t]
\caption{Constructing the $(c+1)$-DD}
\label{alg:construct}

\Input{a multiset $s$ and a set $t$ of $c$-colored degrees}
\Output{a $(c+1)$-DD}

let $\degr \gets [][]$ (an empty associative array), $\dn[\delta] \gets 0$ for all $\delta \in s$, $\comp[j'] \gets \set{\set{}}$ for all $j' \in [c]$, and $\done[j'] \gets \false$ for all $j' \in [c]$\;
construct a root node $\rho$ with a configuration $(\degr, \dn, \comp, \done)$\;
let $N_1 \gets \set{\rho}$, $N_i \gets \emptyset$ for $i \in \set{2, \dots, m}$ and $N_{m+1} \gets \set{\top, \bot}$\;
\For{$i = 1, \dots, m$}{
  \For{$\alpha \in N_i$}{
    \For{$j = 0, \dots, c$}{
      $\alpha_j \gets \textsc{Child}(\alpha, j)$\;
      \If{$\alpha_j \notin N_{i+1} \cup N_{m+1}$}{
        add a new node $\alpha_j$ with label $i+1$ to $N_{i+1}$\;
      }
      let $\alpha_j$ be the $j$-child of $\alpha$\;
    }
  }
}
\Return the $(c+1)$-DD consisting of nodes of $N_1, \dots, N_{m+1}$\;
\end{algorithm}

\begin{algorithm}[t]
\caption{\textsc{Child}($\alpha, i, x$)}
\label{alg:update}

\Input{node $\alpha$ with configuration $(\degr, \dn, \comp, \done)$ and a child number $j$}
\Output{a node $\alpha_j$ that will be the $j$-th child of $\alpha$}

let $i \gets \ell(\alpha)$ and $\set{u_1, u_2} \gets e_i$\;
generate $\alpha_j$\;
let $\degr' \gets \degr$, $\dn' \gets \dn$, $\comp' \gets \comp$, and $\done' \gets \done$\;
\For{$k \in [2]$}{
  \If{$u_k \notin F_i$}{
    $\degr'[u_k] \gets (0, \dots, 0)$\;
    \lFor{$j' \in [c]$ such that $\done'[j'] = \false$}{$\comp[j'] \gets \comp[j'] \cup \set{\set{u_k}}$}
  }
}
\If{$j > 0$}{
  \lIf{$\done'[j] = \true$}{\Return $\bot$}
  \For{$k \in [2]$}{
    $\degr'[u_k][j] \gets \degr'[u_k][j] + 1$\;
    \lIf{for all $\delta \in s \cup t$, $\degr'[u_k][j] \not\leq \delta$}{\Return $\bot$}
  }
  for each $k \in [2]$, let $C(u_k)$ be the set containing $u_k$ in the current $\comp'[j]$\;
  \If{$C(u_1) \neq C(u_2)$}{$\comp'[j] \gets (\comp'[j] \setminus \set{C(u_1), C(u_2)}) \cup \set{C(u_1) \cup C(u_2)}$}
}

\For{$j' \in [c]$ such that $\done'[j'] = \false$}{
  let $L \gets \setST{C \in \comp'[j]}{C \cap F_{i+1} = \emptyset}$ and $S \gets \comp'[j] \setminus L$\;
  \lIf{$|L| > 1$}{\Return $\bot$}
  \ElseIf{$|L| = 1$}{
    \lIf{$|S| > 0$}{\Return $\bot$}
    \Else{
      $\done'[c] \gets \true$\;
      \If{for all $j' \in [c]$, $\done'[j'] = \true$}{
        \lIf{for all $\delta \in s$, $\dn'[\delta] = s(\delta)$}{\Return $\top$}
        \lElse{\Return $\bot$}
      }
      $\comp'[c] \gets \comp'[c] \setminus L$\;
    }
  }
}

\For{$k \in [2]$}{
  \If{$u_k \notin F_{i+1}$}{
    \If{$\degr'[u_k] \in s$}{
      $\dn'[\degr'[u_k]] \gets \dn'[\degr'[u_k]] + 1$\;
      \lIf{$\dn'[\degr'[u_k]] > s(\degr'[u_k])$}{\Return $\bot$}
    }
    \lElseIf{$\degr'[u_k] \notin t$}{\Return $\bot$}
    let $C(u_k)$ be the set containing $u_k$ in the current $\comp'[j]$\;
    $\comp' \gets (\comp' \setminus \set{C(u_k)}) \cup (\set{C(u_k) \setminus \set{u_k}})$\;
  }
}
\lIf{$i = m$}{\Return $\bot$}
let $(\degr', \dn', \comp', \done')$ be the configuration of $\alpha_j$\;
\Return $\alpha'_j$\;
\end{algorithm}

\section{Proofs omitted from \secref{sec:algo}}\label{app:proof}
We show the proofs omitted from \secref{sec:algo}.
For Theorems~\ref{th:edge}--\ref{th:complete}, there is two parts in the proofs: correctness of extended profiles and widths of the output DDs.
The titles of the paragraphs indicate them.

\subsection{Proof of \theoref{th:width}}
For a vertex $v$, the number of different values for $\degr[v]$ is at most $\size{\down{s \cup t}}$.
Since the frontier has at most $w$ vertices, the number of $\degr$ is at most $\size{\down{s \cup t}}^w$.
For a colored degree $\delta \in s$, the value of $\dn[\delta]$ is in $\set{0, \dots, s(\delta)}$.
Therefore, the number of different values for $\dn$ is $\prod_{\delta \in s} (s(\delta) + 1)$.
For each color $j' \in [c]$, $\comp$ maintains the partition of $w$ vertices in the frontier.
Thus, the number of different values for $\comp$ is $\left(2^{\bigO{w\log w}}\right)^c = 2^{\bigO{cw\log w}}$.
For each color $j' \in [c]$, $\done[j']$ is either $\true$ or $\false$, and thus the number of different values for $\done$ is $2^c$.
Taking a product of all the numbers,
\begin{align}
  &\size{\down{s \cup t}}^w \cdot \left(\prod_{\delta \in s} (s(\delta) + 1)\right) \cdot 2^{\bigO{cw\log w}} \cdot 2^c \\
  &= 2^{\bigO{cw\log w}} \size{\down{s \cup t}}^{w} \prod_{\delta \in s} (s(\delta) + 1).
\end{align}

\subsection{Proof of \theoref{th:edge}}
\paragraph*{Extended profile.}
Observe that a subdivision of a graph $H$ is obtained by replacing each of its edges by a path with length one or more.
Let us color the paths with distinct colors.
The colored degree multiset of the colorized graph, with the constraint ``each colored subgraph is connected,'' suffices to ensure that the graph is homeomorphic to $H$.
Each colored subgraph must be a path because there are two vertices with degree 1 and an arbitrary number of vertices with degree 2 and is connected.
In addition, two paths with different colors $i$ and $j$ share their endpoints if and only if there is a vertex whose both color-$i$ and color-$j$ degrees are both 1.
Therefore, for every graph $H$, we can identify $\subdivision{H}$ by the constraints with $|E(H)|$ colors.

\paragraph*{Width.}
We derive the width in \eqref{eq:width_edge_naive} from \eqref{eq:width}.
Now $c = |E(H)|$, $t = \Delta^{|E(H)|}$, and all the tuples in $s$ are dominated by $(1, \dots, 1)$ because at least one edge with each color incidents to a vertex.
Thus, $\down{s \cup t} = \down{\set{(1, \dots, 1}}) \cup \down{\Delta^{|E(H)|}} = \down{\set{(1, \dots, 1}}) \cup \Delta^{|E(H)|}$.
Since $\down{\set{(1, \dots, 1)}}$ and $\Delta^{|E(H)|}$ are disjoint,
$\size{\down{s \cup t}} = \size{\down{\set{(1, \dots, 1)}}} + \size{\Delta^{|E(H)|}} = 2^{|E(H)|} + |E(H)|$.
In addition, $\prod_{\delta \in s} (s(\delta) + 1) \leq 2^{\size{V(H)}}$ because there are at most $|V(H)|$ different tuples in $s$.
Based on the above discussion,
\begin{align}
  &2^{\bigO{cw\log w}} \size{\down{s \cup t}}^{w} \prod_{\delta \in s} (s(\delta) + 1) \\
  &\leq 2^{\bigO{|E(H)|w\log w}} \left(2^{|E(H)|} + |E(H)|\right)^w 2^{\size{V(H)}} \\
  &= 2^{\bigO{|E(H)|w\log w} + |V(H)|} \left(2^{|E(H)|} + |E(H)|\right)^w.
\end{align}

\subsection{Proof of \theoref{th:vertex}}
\paragraph{Extended profile.}
Let $F$ be a graph that is homeomorphic to $H$.
Each edge in $F$ is associated with an edge in $H$.
If an edge $e$ in $F$ is associated with an edge $e'$ in $H$ and $e'$ has the color $i$ in $H^{\vertexCover{H}}$, we color $e$ with $i$ in $F$.
The obtained $\vertexCover{H}$-colorized graph of $F$ satisfies (a) and (b) in \defref{def:ext_profile}.

Let $F$ be a graph and $F^c$ be a $c$-colorized graph of $F$ such that it satisfies (a) and (b) in \defref{def:ext_profile}.
First, we show that, in $F^c$, each colored subgraph is homeomorphic to a star.
For each integer $i$ in $[c]$, let $M_i$ be the multiset of degrees of vertices in the color-$i$ subgraph of $F^c$.
By (a) in \defref{def:ext_profile}, the multiset $M_i$ satisfies one of the following:
\begin{enumerate}
  \item $M_i = \set{1^2, 2^*}$, where $2^{*}$ means there are an arbitrary number (perhaps zero) of vertices with degree 2, or
  \item $M_i = \set{x^1, 1^x, 2^*}$ for an integer $x \geq 3$.
\end{enumerate}
If $M_i = \set{1^2, 2^*}$, together with (b) in \defref{def:ext_profile}, the color-$i$ subgraph of $F^c$ is a path.
Thus, it is homeomorphic to $K_{1,1}$.
Note that it is also homeomorphic to $K_{1,2}$ if $M_i$ contains at least one vertex with degree $2$.
If $M_i = \set{x^1, 1^x, 2^*}$ for an integer $x \geq 3$, together with (b) in \defref{def:ext_profile}, the color-$i$ subgraph of $F^c$ is isomorphic to $K_{1,x}$.
For each color $i \in [c]$, we process the color-$i$ subgraph of $F^c$ as follows:
\begin{itemize}
  \item If $M_i = \set{x^1, 1^x, 2^*}$ for an integer $x \geq 3$, we smooth all the vertices with degree 2, where \emph{smoothing} a vertex $v$ with degree 2 means removing vertex $v$ and edges incident to it and connecting two vertices that were adjacent with $v$ by a new edge.
  \item If $M_i = \set{1^2, 2^*}$, we check $\coloredDegreeSet{H^{\vertexCover{H}}}$.
  If $\coloredDegreeSet{H^{\vertexCover{H}}}$ contains a vertex with color-$i$ degree 2 (note that there exists at most one such vertex in $\coloredDegreeSet{H^{\vertexCover{H}}}$), we smooth all the vertices but one with degree 2.
  If not, we smooth all the vertices with degree 2.
\end{itemize}
Let $I^c$ be the $c$-colorized graph obtained by the above procedure and $I$ be its underlying graph.
In $I^c$, each colored subgraph is isomorphic to a star and $\coloredDegreeSet{I^c} = \coloredDegreeSet{H^c}$.
Therefore, by \theoref{th:unigraph}, the graph $I$ is isomorphic to $H$.
Since $F$ is obtained by inserting smoothed vertices into edges in $I$, the graph $F$ is a subdivision of $I$.
It follows that $F$ is homeomorphic to $H$.

\paragraph*{Width.}
Now $c = \vertexCover{H}$ and $t = \Delta^{\vertexCover{H}}$.
Since each colored subgraph of $H^{\vertexCover{H}}$ is a star, every colored degree $\chi$ in $\coloredDegreeSet{H^{\vertexCover{H}}}$ satisfies that there exist at most one color $i$ such that $\chi_i$ exceeds 1.
Therefore,
\begin{equation}\label{eq:D_vertex_1}
  \down{s \cup t} \subseteq \setST{(\chi_1, \dots, \chi_{\vertexCover{H}}) \in \mathbb{N}^{\vertexCover{H}}}{\begin{array}{l}
    \exists i \in [\vertexCover{H}], \chi_i \leq |V(H)| - 1, \\
      j \neq i \Rightarrow \chi_j \leq 1
  \end{array}}
\end{equation}
Note that $|V(H)| - 1$ is an upper bound of the maximum degree of $H$.
From \eqref{eq:D_vertex_1}, we obtain $\size{\down{s \cup t}} \leq 2^{\vertexCover{H}}\vertexCover{H}$.
Combining the above with $\prod_{\delta \in s} (s(\delta) + 1) \leq 2^{\size{V(H)}}$, we obtain
\begin{align}
  &2^{\bigO{cw\log w}} \size{\down{s \cup t}}^{w} \prod_{\delta \in s} (s(\delta) + 1) \\
  &\leq 2^{\bigO{\vertexCover{H} w\log w}} \left(2^{\vertexCover{H}}\vertexCover{H}\right)^w 2^{\size{V(H)}} \\
  &= 2^{\bigO{\vertexCover{H} w\log w} + |V(H)|} \left(2^{\vertexCover{H}}\vertexCover{H}\right)^w.
\end{align}


\subsection{Proof of \theoref{th:complete_bipartite}}
\paragraph*{Extended profile.}
Let $A$ and $B$ be the parts of $K_{a, b}\ (a \leq b)$ consisting of $a$ and $b$ vertices, respectively.
The set $A$ is a minimum vertex cover of $K_{a, b}$.
Let us decompose $K_{a, b}$ into $a$ stars such that their centers are $A$ and the leaves are $B$.
We color the stars with distinct colors from $[a]$.
In the colorized graph, the multisets of colored degrees of the vertices in $A$ and $B$ are $M_{a, b}^1$ and $M_{a, b}^2$, respectively.
By \theoref{th:vertex}, $M = M_{a, b}^1 \cup M_{a, b}^2$ is an extended profile of $\mathcal{S}(K_{a, b})$.

\paragraph*{Width.}
Now $c = a$, $s = M_{a,b}$, and $t = \Delta^a$.
Since $\down{s \cup t} = \down{M_{a,b}^1} \cup \down{M_{a,b}^2} \cup \down{\Delta^a} = \down{M_{a,b}^1} \cup \down{M_{a,b}^2}$, we obtain $\size{\down{s \cup t}} \leq \size{\down{M_{a,b}^1}} + \size{\down{M_{a,b}^2}} = ab + 2^a = 2^a + ab$.
Combining the above with $\prod_{\delta \in s} (s(\delta) + 1) = 2^a (b+1)$,
\begin{align}
  &2^{\bigO{cw\log w}} \size{\down{s \cup t}}^{w} \prod_{\delta \in s} (s(\delta) + 1) \\
  &\leq 2^{\bigO{aw\log w}} \left(2^a + ab\right)^w 2^a (b+1) \\
  &= 2^{\bigO{aw\log w}} \left(2^a + ab\right)^w.
\end{align}

\subsection{Proof of \theoref{th:complete}}
If a graph $F$ is homeomorphic to a graph $H$, the original vertices of $H$ are the \emph{branch vertices} of $F$ and the other vertices are the \emph{subdividing vertices}.
Note that the degree of a branch vertex equals the original degree in $H$ while the degree of a subdividing vertex is 2. (The degree of a branch vertex can be 2 when its original degree in $H$ is 2.)

\paragraph*{Extended profile.}
Let us decompose a subdivision $F$ of $K_{a}$ into subdivisions of $K_3, K_{1, 3}, \dots,$ and $K_{1, a-1}$ so that their centers and leaves are the branch vertices of $F$ and color them with distinct colors.
We denote the colorized graph by $J$.
$J$ is an $(a-2)$-colored graph and the multiset of colored degrees of the branch vertices in $J$ is $ M_{a-2} = M_{a-2}^1 \cup M_{a-2}^2$, where
$M_{a-2}^1$ and $M_{a-2}^2$ are the multisets of the colored degrees of (three arbitrarily chosen) branch vertices of a subdivision of $K_3$ and the centers of the subdivisions of the stars, respectively.
Therefore, $J$ satisfies the constraint $\mathcal{C}^{*}_M$, where $M$ is $M_{a-2}$.

We show that the converse is true by induction.
When $a = 3$, for a graph $F$, assume that there exists a $3-2 = 1$-colorized graph $F^1$ satisfying the constraint $\mathcal{C}^{*}_{M}$, where $M = M_1$.
Since $F_1$ has an arbitrary number of vertices of degree 2 and is connected, $F_1$ is a cycle, that is, a subdivision of $K_3$.
Next, for an integer $a \geq 3$, assume that ``For a graph $I$, if there exists an $(a-2)$-colored graph $I^{a-2}$ satisfying the constraint $\mathcal{C}^{*}_{M}$, where $M = M_{a-2}$, $I$ belongs to $\mathcal{S}(K_a)$'' is true.
For a graph $F$, assume that there exists an $(a-1)$-colored graph $F^{a-1}$ satisfying the constraint $\mathcal{C}^{*}_{M'}$, where $M' = M_{a-1}$.
Among the colored degree multiset of $F^{a-1}$, the part of colors from $1$ to $a-2$ is $M_{a-2}$ plus an arbitrary number of elements of $\Delta^{a-2}$.
Therefore, by the assumption, the underlying graph of the colored graph from color $1$ to $a-2$ in $F^{a-1}$ forms a subdivision of $K_{a}$.
The remaining part, the color-$(a-1)$ subgraph of $F^{a-1}$, has one vertex with degree $a$, $a$ vertices with degree 1, and an arbitrary number of vertices with degree 2 and is connected.
Therefore, the color-$(a-1)$ subgraph of $F^{a-1}$ forms a subdivision of $K_{1, a}$.
As for its center, its color-$(a-1)$ degree is $a$ and the degrees of the other colors are 0.
As for its leaves, their color-$(a-1)$ degrees are 1.
If the colored degree of a leaf belongs to $M_{a-1}^1$, it is a branch vertex of $K_3$.
Otherwise, $\delta_i = i+1$ implies that it is the center of a subdivision of $K_{1, i+1}$.
Therefore, $F^{a-1}$ is a graph obtained by merging the branch vertices of a subdivision of $K_{a}$ and the leaves of a subdivision of $K_{1, a}$.
It follows that the underlying graph of $F^{a-1}$ is homeomorphic to $K_{a+1}$.

\paragraph*{Width.}
Now $c = a - 2$, $s = M_{a-2}$, and $t = \Delta^{a-2}$.
For $\down{s \cup t}$, the following holds:
\begin{align}
  \down{s \cup t}
  &= \down{M_{a-2}^1} \cup \down{M_{a-2}^2} \cup \down{\Delta^{a-2}} \\
  &= \down{\set{(\underbrace{1, \dots, 1}_{a-2})}}
    \cup \setST{(\delta_1, \dots, \delta_{a-2})}{
      \begin{array}{l}
        \exists i \in [a-2], \\
        j < i \Rightarrow \delta_j = 0, \\
        2 \leq \delta_i \leq i+1, \\
        j > i \Rightarrow \delta_j = 1
      \end{array}} \\
  &= 2^{a-2} + \sum_{i=1}^{a-2} \left(i \cdot 2^{a-2-i}\right) \\
  &= 2^{a-2} + (2^{a-1} - a) \\
  &= 3 \cdot 2^{a-2} - a.
\end{align}
In addition, $\prod_{\delta \in s} (s(\delta) + 1) = (3+1) \cdot (1+1)^{a-3} = 2^{a-1}$ holds. Thus, we obtain
\begin{align}
  &2^{\bigO{cw\log w}} \size{\down{s \cup t}}^{w} \prod_{\delta \in s} (s(\delta) + 1) \\
  &\leq 2^{\bigO{(a-2)w\log w}} \left(3 \cdot 2^{a-2} - a\right)^w 2^{a-1} \\
  &= 2^{\bigO{aw\log w}} \left(3 \cdot 2^{a-2} - a\right)^w.
\end{align}

\section{Extended profile of $\mathcal{S}(K_4 - e)$}\label{sec:diamond}
Recall that $K_4 - e$ is the graph obtained by removing an arbitrary edge from $K_4$.

\begin{theorem}\label{th:diamond}
  A multiset $M = \set{(3, 0), (1, 2), (1, 1)^2}$ of 2-colored degrees is an extended profile of $\subdivision{K_4 - e}$.
  There is an algorithm to construct a DD representing $\MDD{3}{\mathcal{C}_M^*}$ with width $2^{\bigO{w\log w}}$.
\end{theorem}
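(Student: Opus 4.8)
The plan is to establish the two assertions---that $M$ is an extended profile of $\subdivision{K_4 - e}$ and that the constructed DD has width $2^{\bigO{w\log w}}$---separately, mirroring the proofs of Theorems~\ref{th:vertex} and~\ref{th:complete_bipartite}.

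\paragraph*{Extended profile.}
First I would identify $K_4 - e$ as the diamond graph on vertices $a, b, c, d$, in which the two degree-$3$ vertices $a, b$ are adjacent and each of them is joined to the two degree-$2$ vertices $c, d$. Its two triangles share the edge $\edge{a}{b}$, so $\set{a, b}$ is a vertex cover while no single vertex covers all five edges; hence $\vertexCover{K_4 - e} = 2$. The plan is then to invoke \theoref{th:vertex} with $\set{a, b}$ as the set of star centers. Concretely I would assign the shared edge $\edge{a}{b}$ together with $\edge{a}{c}$ and $\edge{a}{d}$ to color $1$ (the star $K_{1,3}$ centered at $a$) and $\edge{b}{c}, \edge{b}{d}$ to color $2$ (the star $K_{1,2}$ centered at $b$); both color classes are stars, so the hypothesis of \theoref{th:vertex} holds. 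Reading off the colored degrees gives $a \mapsto (3,0)$, $b \mapsto (1,2)$, and $c, d \mapsto (1,1)$, so the resulting colored-degree multiset is exactly $M$, and \theoref{th:vertex} certifies that $M$ is an extended profile of $\subdivision{K_4 - e}$.

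\paragraph*{Width.}
For the width bound I would substitute $c = 2$, $s = M$, and $t = \Delta^2 = \set{(2,0),(0,2)}$ into the width formula \eqref{eq:width} of \theoref{th:width}. The leading factor becomes $2^{\bigO{2w\log w}} = 2^{\bigO{w\log w}}$. To handle $\size{\down{s \cup t}}$ I would note that, as a set, $s \cup t = \set{(3,0),(1,2),(1,1),(2,0),(0,2)}$, whose maximal elements under domination are $(3,0)$ and $(1,2)$; the tuples they dominate are precisely the eight tuples $(0,0),(1,0),(2,0),(3,0),(0,1),(0,2),(1,1),(1,2)$, so $\size{\down{s \cup t}} = 8$. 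For the last factor, the multiplicities in $s$ are $1, 1, 2$ for $(3,0), (1,2), (1,1)$ respectively, giving $\prod_{\delta \in s}(s(\delta)+1) = 2 \cdot 2 \cdot 3 = 12$. Hence the width is at most $2^{\bigO{w\log w}} \cdot 8^{w} \cdot 12$; since $8^{w} = 2^{3w}$ is absorbed into $2^{\bigO{w\log w}}$ and the factor $12$ is constant, the bound collapses to $2^{\bigO{w\log w}}$, as claimed.

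I do not anticipate a genuine obstacle, since both parts are specializations of earlier results: the extended-profile claim is the direct instance of \theoref{th:vertex} once the minimum vertex cover $\set{a,b}$ is chosen, and the width is pure arithmetic inside \theoref{th:width}. The only points demanding slight care are routing the shared edge $\edge{a}{b}$ into a single star so that both color classes genuinely remain stars, and enumerating $\down{s \cup t}$ correctly; neither is delicate.
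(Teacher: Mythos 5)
Your proof is correct and follows essentially the same route as the paper's: the paper also decomposes $K_4 - e$ into a $K_{1,3}$ and a $K_{1,2}$ centered at the two degree-3 vertices (its minimum vertex cover), reads off the same multiset $M$, invokes \theoref{th:vertex}, and then substitutes $c = 2$, $\size{\down{s \cup t}} = 8$, and $\prod_{\delta \in s}(s(\delta)+1) = 12$ into \eqref{eq:width}. The only difference is an immaterial relabeling of the vertices (the paper's second degree-3 vertex is called $c$ rather than $b$).
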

\begin{proof}
  Let $K_4 - e = (\set{a, b, c, d}, \set{\edge{a}{b}, \edge{a}{c}, \edge{a}{d}, \edge{c}{b}, \edge{c}{d}})$.
  The set $\set{a, c}$ of vertices is a minimum vertex cover of the graph.
  We color the star with edges $\edge{a}{b}$, $\edge{a}{c}$, and $\edge{a}{d}$ by red (color 1) and that with $\edge{c}{b}$ and $\edge{c}{d}$ by green (color 2).
  The colored degree multiset of the colorized graph is $M$.
  By \theoref{th:vertex}, $M$ is an extended profile of $\subdivision{K_4 - e}$.
  We obtain the width as follows:
  \begin{align}
    &2^{\bigO{cw\log w}} \size{\down{s \cup t}}^{w} \prod_{\delta \in s} (s(\delta) + 1) \\
    &= 2^{\bigO{2w\log w}} \cdot 8^w \cdot (2 \cdot 2 \cdot 3) \\
    &= 2^{\bigO{w\log w}}.
  \end{align}
\end{proof}

\section{Details of backtracking-based method}\label{app:backtrack}
In this section, we show the details of an algorithm to explicitly enumerate planar subgraphs based on backtracking, which we used in \secref{sec:experiment}.
Pseudocode is given in Algorithm~\ref{alg:backtrack}.
Given a graph $G$, we first call ${\textsc{Main}(G)}$ (Line~1).
It calls a subfunction ${\textsc{Rec}}$.
Its inputs are a graph $G$, a subset of edges $S$, the index of the edge that should be processed next.
If $i = |E(G)| + 1$, we have found a solution, planar subgraph, and thus we output it (Line~3).
Otherwise, we guess whether $e_i$ is adopted for a solution of not.
We always call ${\textsc{Rec}}(G, S, i+1)$ because $S$ is a planar subgraph.
In contrast, we call ${\textsc{Rec}}(G, S \cup \set{e_i}, i+1)$ only if $G[S \cup \set{e_i}]$ is planar.
Since planar graphs are closed under taking subgraphs, the algorithm correctly outputs all the planar subgraphs.
The algorithm runs a planarity test $\bigO{|E(G)|}$ times for each solution.
Since a planarity test can be done in $\bigO{|V(G)|}$ time~\cite{hopcroft1974efficient}, the time complexity of the algorithm is $\bigO{N \cdot |E(G)| \cdot |V(G)|}$, where $N$ is the number of solutions.


\begin{algorithm}[t]
\caption{Enumerating planar subgraphs based on backtracking}
\label{alg:backtrack}

\Input{a graph $G$}
\Output{all planar subgraphs in $G$}
\SetKwProg{Def}{def}{:}{}

\Def{${\textsc{Main}}(G)$}{
  ${\textsc{Rec}}(G, \emptyset, 1)$\;
}

\Def{${\textsc{Rec}}(G, S, i)$}{
  \lIf{$i = |E(G)| + 1$}{output $S$}
  \Else{
    ${\textsc{Rec}}(G, S, i + 1)$\;
    \If{$G[S \cup \set{e_i}]$ is planar}{
      ${\textsc{Rec}}(G, S \cup \set{e_i}, i + 1)$;
    }
  }
}
\end{algorithm}

\section{Applying our framework to several types of subgraphs}\label{app:exp_subgraphs}
In this section, we apply our framework to enumerate all types of subgraphs in \tabref{tab:graph_classes}.
As stated in \secref{sec:FTM-DD}, to enumerate different type of sugraphs, it is enough to change $\mathcal{H}$, the set of forbidden topological minors.

Figures~\ref{fig:complete_graphwise}--\ref{fig:rome_graphwise} show the results.
We call an algorithm to enumerate planar subgraphs \textsc{Planar}, and so on.
The results for king graphs (\figref{fig:king_graphwise}) is easiest to understand.
We observe that \textsc{Planar} takes the most time because it uses three colors while the others two colors.
Among the algorithms using two colors, \textsc{Outerplanar} is most time-consuming because it needs two topological minors.
The reason why \textsc{Series-Parallel} run faster than \textsc{Cactus} is that $K_4$ has better ``regularity'' than $K_4 - e$, which makes the size of the output DD smaller.
Indeed, when $b = 500$, the size (number of nodes) of the DD constructed by \textsc{Series-Parallel} was $4,582,909$ while that by \textsc{Cactus} $7,289,225$.
The similar relation hold both for Figures~\ref{fig:complete} and \ref{fig:rome_graphwise}.
For rome graphs (\figref{fig:rome_graphwise}), the time for $G_8$ was smaller than $G_6$ although $G_8$ has more edges than $G_6$.
It is because $w$ of $G_8$ was smaller than that of $G_6$.

\begin{figure}[t]
    \centering
    \begin{subfigure}{.49\linewidth}
        \includegraphics[scale=0.3]{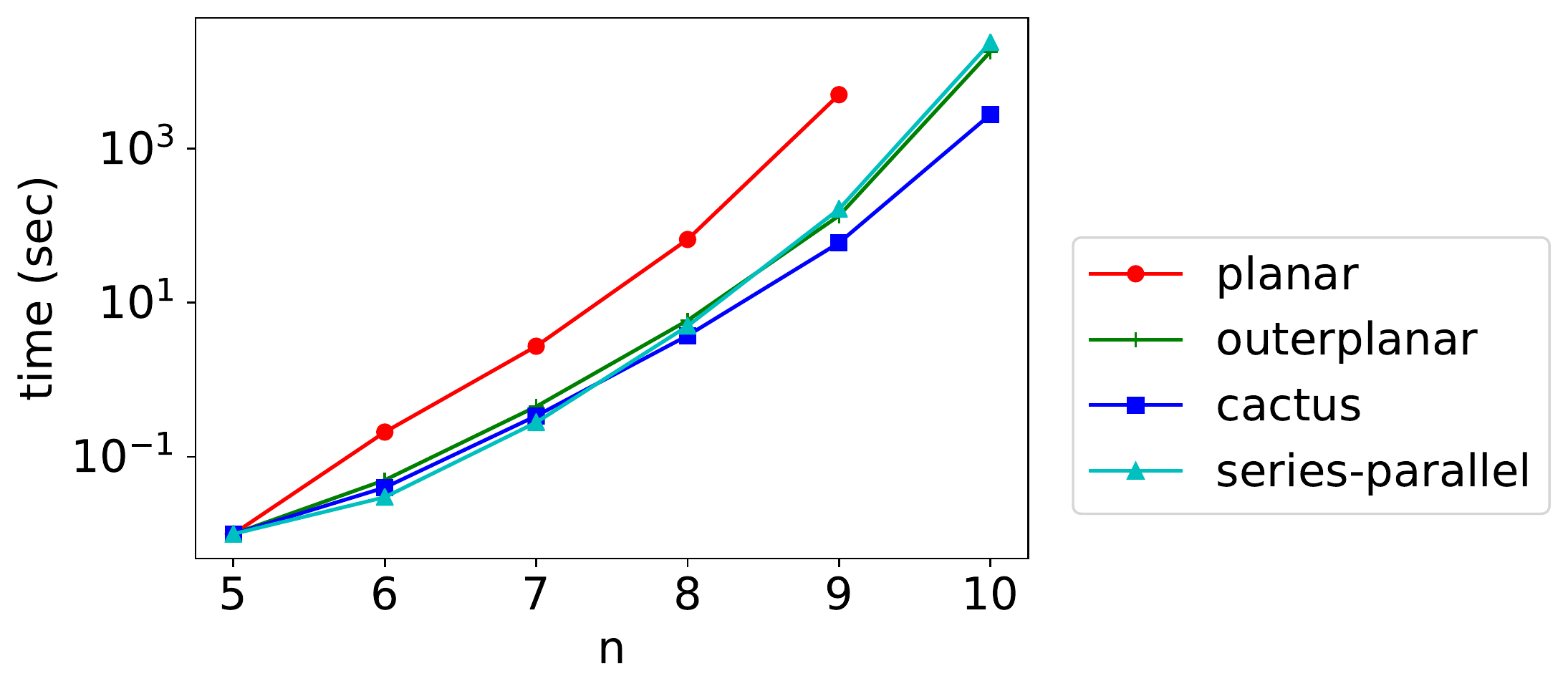}
        \caption{Complete graph $K_n$.}
        \label{fig:complete_graphwise}
    \end{subfigure}
    \hfill
    \begin{subfigure}{.49\linewidth}
        \centering
        \includegraphics[scale=0.3]{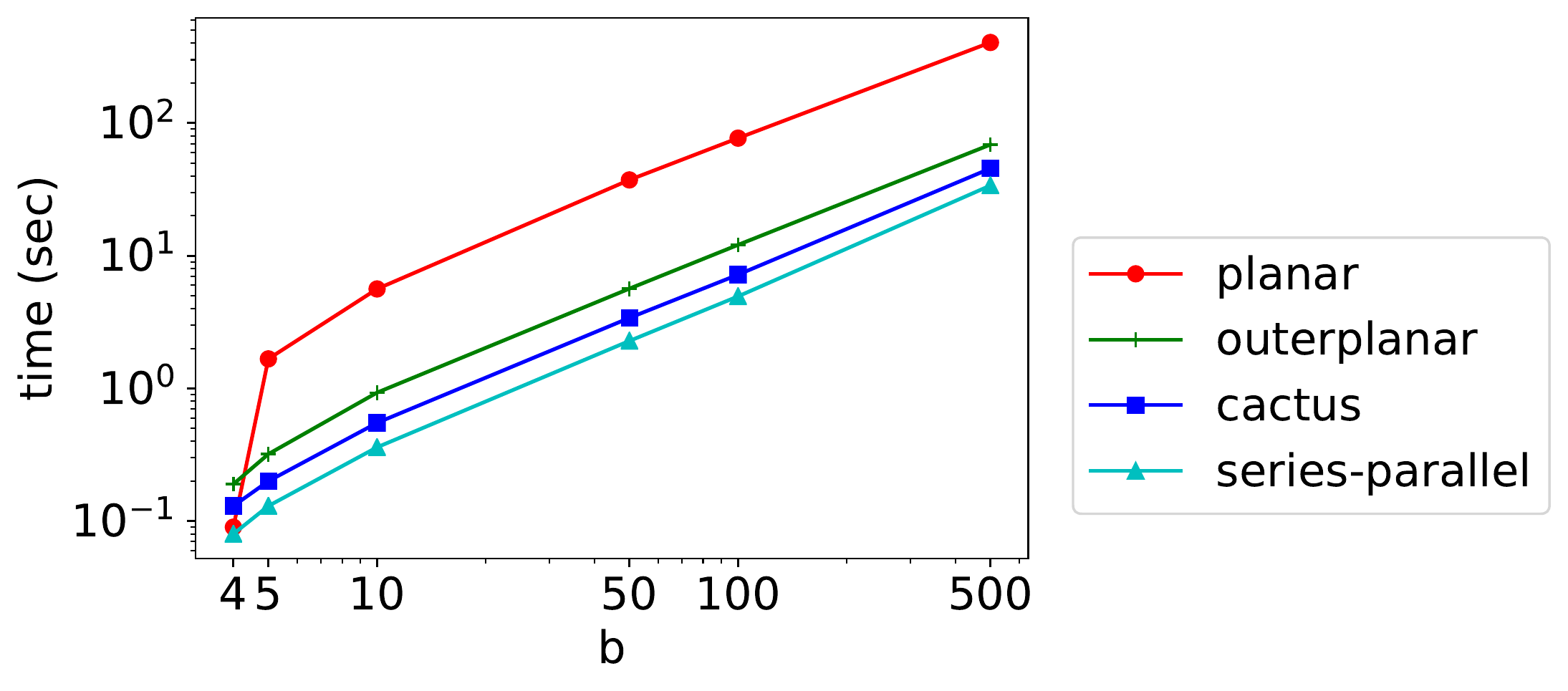}
        \caption{King graph $X_{3, b}$.}
        \label{fig:king_graphwise}
    \end{subfigure}
    \begin{subfigure}{.49\linewidth}
        \centering
        \includegraphics[scale=0.3]{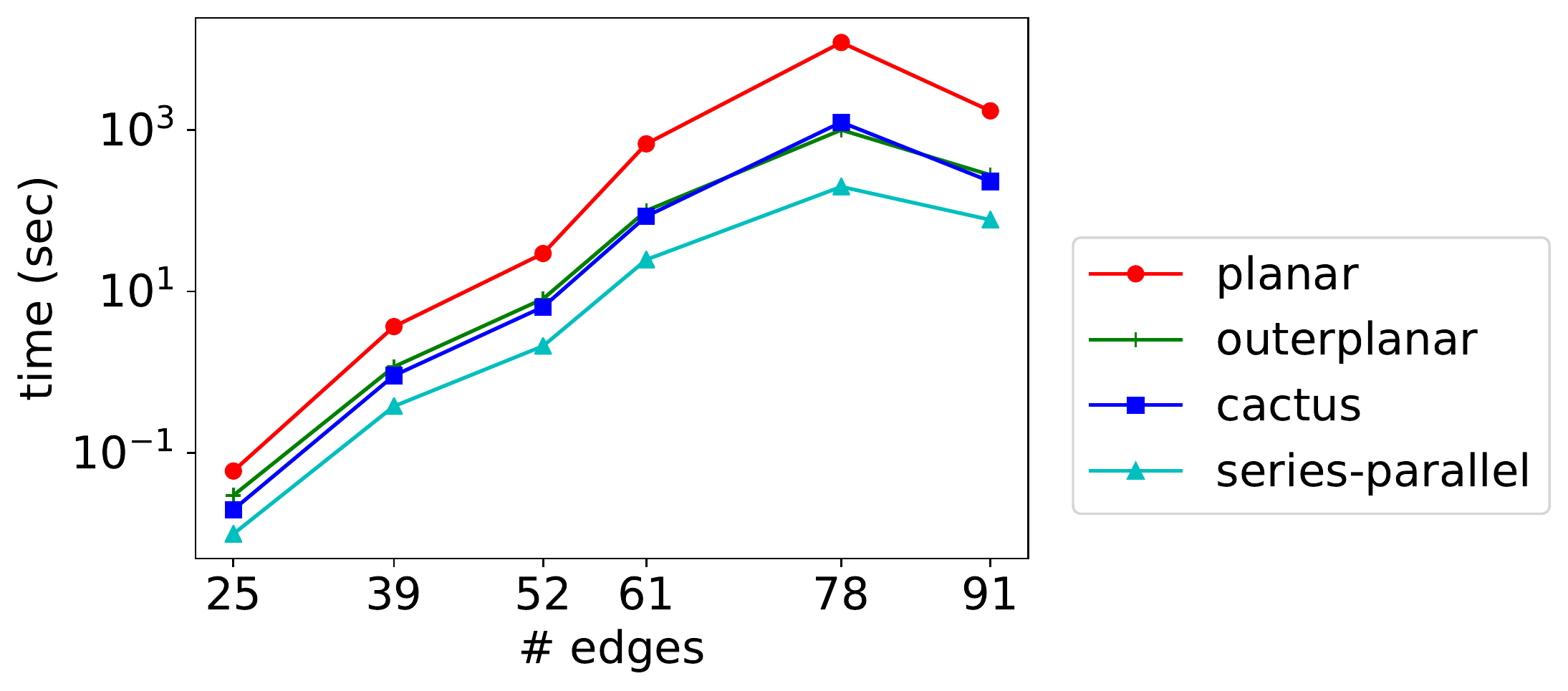}
        \caption{Rome graph.}
        \label{fig:rome_graphwise}
    \end{subfigure}
    \caption{Results of applying our framework to enumerate several types of subgraphs. For each figure, its horizontal axis shows the size of an input graph and vertical one running time (in seconds). Note that all the vertical axes and the horizontal axis of \figref{fig:king_graphwise} are logarithmic.}
\end{figure}

\end{document}